\renewcommand*{\backref}[1]{\ifx#1\relax \else Page #1 \fi}
\renewcommand*{\backrefalt}[4]{%
    \ifcase #1 \footnotesize{(Not cited.)}%
    \or        \footnotesize{(Cited on page~#2.)}%
    \else      \footnotesize{(Cited on pages~#2.)}%
    \fi}
\DeclareMathOperator{\diag}{diag}
\DeclareMathOperator{\trace}{trace}
\newtheoremstyle{named}{}{}{\itshape}{}{\bfseries}{.}{.5em}{\thmnote{#3's }#1}
\theoremstyle{named}
\theoremstyle{plain}
\newtheorem{theorem}{Theorem}
\newlength{\widebarargwidth}
\newlength{\widebarargheight}
\newlength{\widebarargdepth}
\long\def\@makecaption#1#2{
        \vskip 0.8ex
        \setbox\@tempboxa\hbox{\small {\bf #1:} #2}
        \parindent 1.5em  
        \dimen0=\hsize
        \advance\dimen0 by -3em
        \ifdim \wd\@tempboxa >\dimen0
                \hbox to \hsize{
                        \parindent 0em
                        \hfil
                        \parbox{\dimen0}{\def\baselinestretch{0.96}\small
                                {\bf #1.} #2
                                }
                        \hfil}
        \else \hbox to \hsize{\hfil \box\@tempboxa \hfil}
        \fi
        }
\long\def\comment#1{}
\definecolor{battleshipgrey}{rgb}{0.52, 0.52, 0.51}
\definecolor{darkgray}{rgb}{0.66, 0.66, 0.66}
\definecolor{darkgreen}{rgb}{0.0, 0.2, 0.13}
\definecolor{darkspringgreen}{rgb}{0.09, 0.45, 0.27}
\definecolor{dukeblue}{rgb}{0.0, 0.0, 0.61}
\definecolor{olivedrab7}{rgb}{0.24, 0.2, 0.12}
\definecolor{darkblue}{rgb}{0.0, 0.0, 0.55}
\definecolor{darkscarlet}{rgb}{0.34, 0.01, 0.1}
\definecolor{candyapplered}{rgb}{1.0, 0.03, 0.0}
\definecolor{ao(english)}{rgb}{0.0, 0.5, 0.0}
\definecolor{applegreen}{rgb}{0.55, 0.71, 0.0}
\newcommand{\Var}{\text{Var}}
\newcommand{\hattheta}{\widehat{\theta}}
\theoremstyle{remark}
\begin{document}
\title{Martingale Posteriors from Score Functions}
\author{Fuheng Cui \& Stephen G. Walker \\ \\
Department of Statistics and Data Sciences \\
The University of Texas at Austin\\
email: fuheng.cui@austin.utexas.edu, s.g.walker@math.utexas.edu
}
\date{}
\maketitle
\begin{abstract}
Uncertainty associated with statistical problems arises due to what has not been seen as opposed to what has been seen. 
Using probability to quantify the uncertainty the task is to construct a probability model for what has not been seen conditional on what has been seen.
The traditional Bayesian approach is to use prior distributions for constructing the predictive distributions, though recently a novel approach has used density estimators and the use of martingales to establish convergence of parameter values.
In this paper we reply on martingales constructed using score functions. 
Hence, the method only requires the computing of gradients arising from parametric families of density functions. 
A key point is that we do not rely on Markov Chain Monte Carlo (MCMC) algorithms, and that the method can be implemented in parallel. We present the theoretical properties of the score driven martingale posterior. 
Further, we present illustrations under different models and settings.
\end{abstract}
\textsl{Keywords:} Martingale posterior; score function; Bayesian uncertainty quantification; asymptotic exchangeability; stochastic gradient algorithms.



\section{Introduction}
\label{sec:introduction}

The motivation for this paper is a novel idea about the quantification of uncertainty with an increasing size of population. To make the opening discussions concrete we think of a population mean corresponding to a normal model with a known variance. If the sample size is $n$, and if this is known to coincide with the population size, then the representation of uncertainty is a probability point mass of 1 at the observed sample mean; i.e. $\overline{x}_n$. We now ask how does this point mass change as the true population size grows from $m\geq n$ up to infinity. The case of infinity would represent the population size for the usual Bayesian posterior. For a normal model we are effectively asking for a distribution on $\overline{x}_m$ given $x_{1:n}$ for all $m>n$. Write this is $\Pi_m^{(n)}$. So $\Pi_n^{(n)}$ is always the point mass at $\overline{x}_n$. A posterior distribution for an infinite population will be $\Pi_\infty^{(n)}$, provided the distribution of the limit of the $(\Pi_m^{(n)})$ exists. 

{Let $\theta$ be the parameter of interest with data model $p(x\mid\theta)$, $x \in \mathcal{X}$, $\theta\in\Theta$.} The Bayesian approach would provide $\Pi_m^{(n)}$ from the predictive model which itself comes from the posterior $\pi(\theta\mid x_{1:n})$, derived in the usual way as prior times likelihood. Then,
$$p(x_{n+1:m}\mid x_{1:n})=\int\prod_{i=n+1}^m \mathcal{N}(x_i\mid\theta,\sigma^2)\,\pi(\theta\mid x_{1:n})\,d\theta.$$
This clearly defines $\Pi_m^{(n)}$ since this predictive model provides a distribution for $\overline{x}_m$ given $x_{1:n}$.

It was a result of \cite{Doob_1949} which established that
$\Pi_\infty^{(n)}(\cdot)\equiv \pi(\cdot\mid x_{1:n})$, providing what was originally thought to be a Bayesian consistency theorem. But it is more than this since it indicates an equal importance for the sequence of predictive distributions as with the sequence of posterior distributions.
To make ideas concrete, we will in the Introduction derive the $(\Pi_m^{(n)})$ for the normal model as a first illustration.

Recent research has supported alternative ideas for generating $p(x_{n+1:m}\mid x_{1:n})$ which has simpler constructions to the Bayesian predictive model. See \cite{Fong_2021} and \cite{Holmes}.
In this paper we approach the set up for the predictives to be more in keeping with the goal for the $\Pi_m^{(n)}$ rather than maintaining any particular properties of convenience for the $x_{m+1:n}$, such as exchangeability or conditionally identically distributed; the former leading to a Bayesian set up and the latter to a martingale posterior set up. 

However, due to the desire that the expected value for each $\Pi_m^{(n)}$ be $\overline{x}_n$, we require that the sequence
$(\overline{x}_m)$ be a martingale. This also aids with the convergence properties as $m\to\infty$.
Therefore, we actually define the sequence $(\Pi_m^{(n)})$ via a sequence of random parameter values which in general represent the statistic of interest based on a population of size $m$ with an observed sample of size $n$. The distribution of $\theta_m$ given $x_{1:n}$ is precisely $\Pi_m^{(n)}$.  

It is best here to return to the normal model and write down how the Bayesian version of the sequence works and how our approach is constructed and then comparisons made. 

Let us first simplify the Bayesian model to an objective posterior, so with a flat prior it is that
$\pi(\theta\mid x_{1:n})=\mathcal{N}(\theta\mid \overline{x}_n,\sigma^2/n).$
For the Bayesian predictive model, the predictive sample is given by
$$x_{n+1}=\overline{x}_n+z_1\sigma/\sqrt{n}+z_2\,\sigma,$$
where the first two terms are the sample of a $\theta$ from the posterior and the final term is from the normal model with variance $\sigma^2${, and $z_1$ and $z_2$ are sampled from $\mathcal{N}(0,1)$ independently}. Hence, 
$\overline{x}_{n+1}$ has mean $\bar{x}_n$ and with variance 
$\sigma^2(1+1/n)$. For $\overline{x}_m$, it is easy to see that
the mean is still $\overline{x}_n$, and hence we have a martingale, and the variance is
$$\sigma^2\sum_{i=n+1}^m \frac{1}{i(i-1)}.$$
Note how this increases with $m$ and that in the limit with $m=\infty$ we recover the variance of 
$$\sigma^2\sum_{i=n+1}^\infty\left\{\frac{1}{i-1}-\frac{1}{i}\right\}=\sigma^2/n,$$
which is returning the same variance as with $\pi(\cdot\mid x_{1:n})$, a demonstration of Doob's result.

Our approach, in this special case, would involve taking
$x_{m+1}=\bar{x}_m+z\,\sigma.$
The motivation for this, which extends to the more general models we consider throughout the paper, is that we are dispensing with a posterior component. The difference, as we will show, is negligible and no more than a difference which would occur between two Bayesians using different priors.
We have the $(\overline{x}_m)$ remaining as a martingale, though now the variance of $\overline{x}_m$ is
$$\sigma^2\,\sum_{i=n+1}^m \frac{1}{i^2}.$$
The difference between this variance and that with the Bayesian variance is seen to be of order $1/n^2$. Actually, the difference between posteriors of sample sizes $n$ using different priors will be of order $1/n^2$. This is the order of difference between two Bayesians with different priors, usually comprising a combination of the difference in variances and the difference in means squared.

Our approach to the construction of sequences is based on score functions. To see exactly how, we start by showing a sequence of means $(\theta_m)_{m>n}$ constructed from an observed mean estimator $\theta_n$, based on a sample of size $n$. We therefore introduce what we name the Bayesian score function.
Using the predictive Bayesian model,
$$\theta_{m+1}=\frac{\int \theta\,p(x_{m+1}\mid\theta)\,\pi(\theta\mid x_{1:m})\,d\theta}
{\int p(x_{m+1}\mid\theta)\,\pi(\theta\mid x_{1:m})\,d\theta}$$
which we write as
$$\theta_{m+1}=\theta_m+\frac{\int (\theta-\theta_m)\,p(x_{m+1}\mid\theta)\,\pi(\theta\mid x_{1:m})\,d\theta}
{\int p(x_{m+1}\mid\theta)\,\pi(\theta\mid x_{1:m})\,d\theta}.$$
Hence, we define the Bayesian score function generated by probability density function $\pi$ as
$$s_\pi(x,\theta)=\frac{\int (\gamma-\theta)\,p(x\mid \gamma)\,\pi(\gamma)\,d \gamma}
{\int p(x\mid \gamma)\,\pi(\gamma)\,d \gamma},$$
where $\theta=\int \gamma\pi(\gamma)\,d\gamma$.
So
$\int s_{\pi}(x,\theta)\,p(x\mid\theta)\,\pi(\theta)\,d\theta=0.$
For large $n$ we can recover the Taylor expansion of $p(x\mid \gamma)$ about $\theta$, which, under suitable regularity conditions appearing later, yields
$$p(x\mid \gamma)=p(x\mid \theta)+(\gamma-\theta)p'(x\mid \theta)+\frac{1}{2} (\gamma-\theta)^2\,p''(x\mid \theta)+o(1/n).$$
Hence,
$$s_{\pi_n}(x,\theta)=\sigma^2_n\,\frac{p'(x\mid \theta)}{p(x\mid \theta)}+o(1/n)$$
where $\sigma_n^2=\int(\gamma-\theta)^2\pi_n(\gamma)\,d \gamma$ is the posterior variance. 

An update of the parameters for $m>n$ would then proceed as
$$\theta_{m+1}=\theta_m+\sigma^2_m\,\frac{p'(x_{m+1}\mid \theta_m)}{p(x_{m+1}\mid \theta_m)},$$
where, in order to secure the martingale, $x_{m+1}$ is coming from $p(\cdot\mid \theta_m)$. The sequence of variances are not now available due to the non-existence of a posterior distribution though clearly a sequence of variances can be  determined based on the notion that variances behave as the reciprocal of the sample size; i.e. $1/m$. Therefore, the update is of the form
\begin{equation}\label{sga}
\theta_{m+1}=\theta_m+\sigma^2_m\,s(x_{m+1},\theta_m)
\end{equation}
with $x_{m+1}\sim p(\cdot\mid \theta_m)$. We will explore choices for $(\sigma_m^2)$ later in the paper, though note it is exactly a step size for a gradient ascent/descent algorithm for which there is an abundance of literature.

\begin{center}
\begin{figure}[!htbp]
\begin{center}
\includegraphics[width=\linewidth]{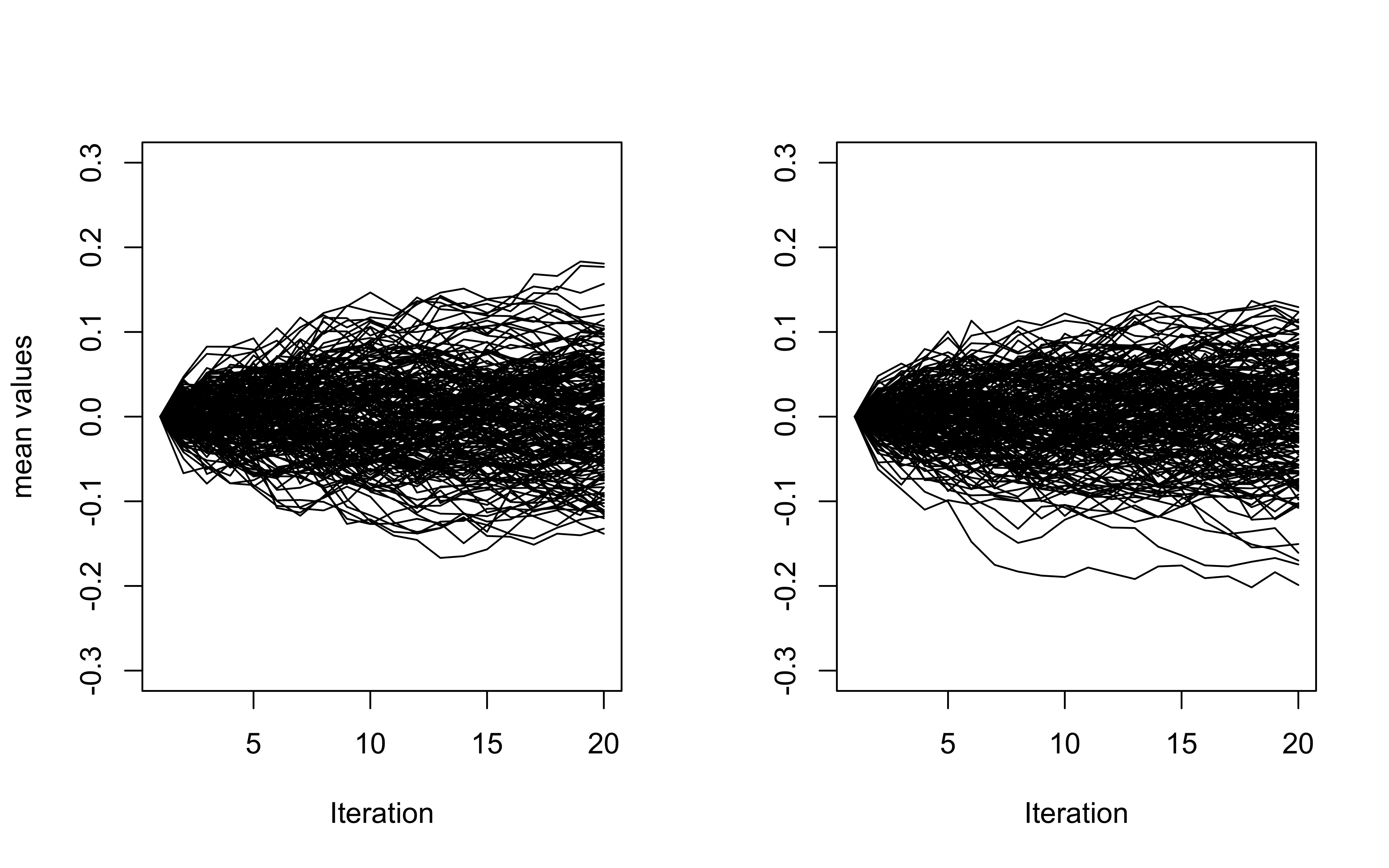}
\caption{Demonstration of Doobs' martingale representation of Bayesian posterior (left panel) and score function martingale (right panel). The iterations are those from $n$ onwards and all trajectories start at 0. }
\label{fig1}
\end{center}
\end{figure}
\end{center}  

To illustrate with a toy example, we consider the normal model with unknown mean $\theta$ and known variance 1. The traditional objective Bayesian posterior distribution is $\mathcal{N}(\cdot\mid\bar{x}_n,1/n)$, and we illustrate with $n=50$ and $\bar{x}_n=0$. 
The left panel demonstrates Doob's result, that the posterior means converge to a $\theta_\infty$ and that these come from the posterior.
There are 200 trajectories.
The right panel is 200 trajectories of the martingale based on the score function with $\sigma^2_m=1/(m+1)$.


{Prior to our development of martingale posteriors based on score functions, extensive research has focused on inference and sampling methods for complex posterior distributions. Notable approaches include variational Bayes (VB) and approximate Bayesian computation (ABC). VB can be viewed as a natural extension of the EM algorithm for posterior approximation; see, for instance, a brief tutorial in~\cite{Tran_2021}. ABC, on the other hand, offers a means of bypassing the evaluation of complicated likelihood functions; see \cite{Sisson_2007}, \cite{Toni_2009} and \cite{Stephan_2017} for example.} The layout of the paper is as follows: In Section~\ref{sec:bayesian_martingale_bootstrap}, we show the methodology of martingale posteriors from score functions and the theoretical guarantees in both one- and multi- dimensional cases. 
We present some illustrations under different settings and different models, including exponential families and autoregressive flows in Section~\ref{sec:illustration}. We also compare our martingale posteriors from the score functions with posteriors from {the standard Bayesian approach and} the variational Bayes (VB) approach on both simulated and real datasets. Section~\ref{sec:summary_and_discussion} provides a summary and discussion.

The contribution of this paper can be stated as follows:
\begin{description}
\item  1. We propose a novel prior-free method to construct posterior distributions. 

\item 2. We rely on the notion of asymptotic exchangeability for missing observations, rather than the current Bayesian criterion of exchangeability for both the observed and missing observations.

\item 3. We employ a novel use for stochastic gradient algorithms, with martingales, with the idea of constructing convergent sequences of parameters where the data has been imputed from the predictive model with the current parameter values. 

\item 4. The limit of the parameter sequences represent a sample from the posterior distribution. 

\item 5. As a consequence, we do not need MCMC algorithms and can run the martingales in parallel.


\end{description}

\section{Parametric martingales}
\label{sec:bayesian_martingale_bootstrap}

In this section we motivate the score function gradient ascent algorithm for updating the parameters.
As a first illustration, we show that the score function update is a first order approximation to a Bayesian update. To see this and to elaborate on the Bayesian score function introduced in the Section~\ref{sec:introduction},
$$p(x_{m+1}\mid \theta)=p(x_{m+1}\mid\theta_m)+(\theta_m-\theta)\,p'(x_{m+1}\mid\theta_m)+\frac{1}{2} (\theta_m-\theta)^2\,p''(x_{m+1}\mid\widetilde\theta_m),$$
for some $\widetilde\theta_m$ lying between $\theta$ and $\theta_m$, where $\theta_m$ is the posterior mean based on a sample of size $m$, and $'$ denotes differentiation with respect to $\theta$.
The Bayesian update of the posterior mean is
$$\theta_{m+1}=\frac{\int \theta\,p(x_{m+1}\mid\theta)\,\pi(\theta\mid x_{1:m})\,d\theta}{p_{m}(x_{m+1})}$$
where $p_m(x)=\int p(x\mid\theta)\,\pi(\theta\mid x_{1:m})\,d\theta$.
Hence, some regular Taylor series expansions imply that
$$\theta_{m+1}=
\theta_m+\mbox{Var}(\theta_m\mid x_{1:m})\,\frac{p'(x_{m+1}\mid\theta_m)}{p(x_{m+1}\mid\theta_m)}+o(1/m),$$
if the model is suitably regular and the posterior variance is as usual of order $1/m$. Of course, here
$p'(x\mid \theta)/p(x\mid\theta)=s(x,\theta)$, the score function.
Therefore, the $\epsilon_m$ in (\ref{sga}) could refer to the Bayesian posterior variance with a sample of size $m$. We will look at this idea in more detail later in the paper.

The score function is defined as the gradient of the log-likelihood function with respect to the parameter:
$
s(x,\theta) = \nabla_{\theta} \log p(x\mid\theta).
$
An important property of the score function is that its expectation is equal to 0 at the value of the parameter, under some regularity conditions; i.e.
$
E_{x\sim p(\cdot\mid\theta)}\,s(x,\theta) = 0.
$
The suitable regularity conditions have been studied by a number of various authors. In this paper, we use and modify the conditions given in Chapter 4 of~\cite{Serfling_2009}.

The basic algorithm for constructing the Bayesian parametric 
martingale posteriors $(\Pi_m^{(n)})$ for $m>n$, is given by
\begin{align}
    x_{m+1}  &\sim p(\cdot\mid  \widehat\theta_{m})\label{equ:algo_resample},\\
     \widehat\theta_{m+1} = \hattheta_{m} &+ \epsilon_{m}s(x_{m+1},\widehat\theta_{m})\label{equ:algo_update},
\end{align}
{where $\widehat{\theta}_m$ is the estimate of $\theta$ with the sample of size $m$, i.e. $\widehat{\theta}_m \sim \Pi_m^{(n)}$}. For the step size, we rely on the use of the Fisher information matrix, when available and computable. When not, for say larger models, we can find a suitable step size using ideas for setting such from the stochastic gradient algorithms literature; see \cite{Schaul_2013}, \cite{Nguyen_2019}, \cite{Granziol_2022}, \cite{Wang_2023} and \cite{Shi_2023}, for example.
First we will look at the one dimensional case followed by the multivariate case.


\subsection{The one dimensional case}
\label{subsec:one_dim_cases}
Here we assume $\theta \in \Theta \subset \mathbb{R}$, and the collected data set of size $n$ is $x_{1:n}$. In order to establish convergence of the martingale (\ref{sga}) for $m\geq n$, we 
consider the sequence of variances, and find conditions under which it is bounded. From the properties of conditional variances, we have
\begin{align}
    \Var(\hattheta_{m+1}) &= \Var (\hattheta_m)+E\{\epsilon_m^2 \Var(s(x_{m+1},\hattheta_m))\}\label{equ:var_tower_rule_1}\\
    &=\sum_{i=n}^m E\{\epsilon_i^2 \,\Var(s(x_{i+1},\hattheta_i))\}\label{equ:var_tower_rule_2}.
\end{align}
Here the $E\{\mbox{Var}\}$ term has the variance acting on the $x_{m+1}$ and the expectation on $\widehat\theta_i$. It is well known that $\mbox{Var}(s(x,\theta))=I(\theta)$, the Fisher information. Hence,
$$\mbox{Var}(\widehat\theta_m)=\sum_{i=n}^{m-1}E\{\epsilon_i^2\,I(\widehat\theta_i)\}.$$
We can set a suitable variance by taking 
$\epsilon_i=\tau_i/\sqrt{I(\widehat\theta_i)}$ so $$\Var(\hattheta_{\infty})=\sum_{i=n}^{\infty}\tau_i^2.$$ 
To obtain a standard asymptotic posterior variance of 
$(n\,I(\widehat\theta_n))^{-1}$, see \citep{Vaart_1998}, it would be appropriate to take
$$\tau_i=\frac{1}{(1+i)\sqrt{I(\widehat\theta_n})}.$$
Hence, in this case $\mbox{Var}(\widehat\theta_\infty)<\infty$
and hence $\widehat\theta_\infty<\infty$ a.s. More generally, here we establish regularity conditions under which $\hattheta_\infty$ exists; i.e. $\Pi_{\infty}^{(n)}$ exists.  

\vspace{0.5 em}
\noindent
The following is following the conditions to be found in \cite{Serfling_2009}: Let $\Theta$ to be an open interval in $\mathbb{R}$. We assume:

\begin{enumerate}[label=(RO.1) ,align= left]
\item\label{regularity:existance} 
For each $\theta \in \Theta$, the derivative $\partial \log p(x \mid \theta)/\partial \theta$ exists for all $x$.
\end{enumerate}

\begin{enumerate}[label=(RO.2) ,align= left]
\item\label{regularity:differentiable} 
There exists a function \(g(x)\), such that $\left|\partial p(x \mid \theta)/\partial \theta\right| \leq g(x)$
holds for all $x$ and \(\theta \in \Theta\), and $\int g(x) d x<\infty$.
\end{enumerate}

\begin{enumerate}[label=(RO.3.a) ,align= left]
\item\label{regularity:finite_uniform_bound}
There exists a fixed number $M>0$, such that
\begin{align*}
    0<E_{\theta}\left\{\left(\frac{\partial \log p(X \mid \theta)}{\partial \theta}\right)^{2}\right\}\leq M
\end{align*}
for all $\theta \in \Theta$. 
\end{enumerate}

\begin{enumerate}[label=(RO.3.b) ,align= left]
\item\label{regularity:finite_step_sizes}
For each $\theta \in \Theta$, $E_{\theta}\left\{\left({\partial \log p(X \mid \theta)}/{\partial \theta}\right)^{2}\right\}$ exists. Set $\epsilon_i=\tau_i/\sqrt{I(\widehat\theta_{i})}$ where the series $\{\tau_i\}_{i=n}^\infty$ satisfy $\sum_{i=n}^{\infty}\tau_i^2<\infty$.
\end{enumerate}

\vspace{0.5 em}
\noindent
We can show the convergence of $\{\hattheta_m\}$ as stated in the following theorem.
\begin{theorem}
\label{theorem:convergence}
Consider the Bayesian martingale posterior scheme with score functions ~\ref{equ:algo_resample} and ~\ref{equ:algo_update}. 
\begin{enumerate}
    \item \label{theorem:convergence_finite}
    If the regularity conditions~\ref{regularity:existance} and~\ref{regularity:differentiable} are satisfied, and either~\ref{regularity:finite_uniform_bound} or \ref{regularity:finite_step_sizes} is satisfied, then the $\{\hattheta_i\}$ sequence will converge to a finite random variable $\hattheta_{\infty}$ almost surely, with $\Var(\hattheta_{\infty})<\infty$.
    
    \item \label{theorem:convergence_bounded}
    If the regularity conditions~\ref{regularity:existance},~\ref{regularity:differentiable} and~\ref{regularity:finite_uniform_bound} are satisfied, and $\epsilon_i=1/i$, then $\Var(\hattheta_{\infty})\leq M/(n-1)$.
    
    \item \label{theorem:convergence_accurate}
    If the regularity conditions~\ref{regularity:existance},~\ref{regularity:differentiable} and~\ref{regularity:finite_step_sizes} are satisfied, and $\tau_i=L/i$, where $L$ is a constant, then
    \begin{align*}
        L/n \leq \Var(\hattheta_{\infty})\leq L/(n-1).
    \end{align*}
\end{enumerate}
\end{theorem}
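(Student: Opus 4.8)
The plan is to recognize $\{\hattheta_m\}_{m\ge n}$ as a square-integrable martingale anchored at the deterministic starting value $\hattheta_n$, apply the $L^2$ martingale convergence theorem to obtain the almost-sure limit $\hattheta_\infty$ in part~\ref{theorem:convergence_finite}, and then read off the explicit variance bounds in parts~\ref{theorem:convergence_bounded} and~\ref{theorem:convergence_accurate} from the variance identity already derived together with an elementary integral comparison. First I would verify the martingale property. Conditions~\ref{regularity:existance} and~\ref{regularity:differentiable} (with the dominating function $g$) are exactly what licenses differentiation under the integral sign, so that the score has conditional mean zero at the data-generating parameter: since $x_{m+1}\sim p(\cdot\mid\hattheta_m)$,
$$\expect\brackets{s(x_{m+1},\hattheta_m)\mid\mathcal{F}_m}=\int \frac{\partial}{\partial\theta}\log p(x\mid\hattheta_m)\,p(x\mid\hattheta_m)\,dx=\frac{\partial}{\partial\theta}\int p(x\mid\hattheta_m)\,dx=0,$$
where $\mathcal{F}_m=\sigma(\hattheta_n,\dots,\hattheta_m)$. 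Hence $\expect[\hattheta_{m+1}\mid\mathcal{F}_m]=\hattheta_m$, so $\{\hattheta_m\}$ is a martingale with constant mean $\hattheta_n$.

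Next I would control the second moment. Because martingale increments are conditionally orthogonal and, using the mean-zero property just established, $\Var_\theta(s(X,\theta))=\expect_\theta[s(X,\theta)^2]=I(\theta)$, the variance telescopes into the additive form already recorded, namely $\Var(\hattheta_m)=\sum_{i=n}^{m-1}\expect\braces{\epsilon_i^2\,I(\hattheta_i)}$. Under~\ref{regularity:finite_step_sizes} the choice $\epsilon_i=\tau_i/\sqrt{I(\hattheta_i)}$ collapses the summand to the deterministic $\tau_i^2$, so the partial sums are bounded by $\sum_{i=n}^\infty\tau_i^2<\infty$; under~\ref{regularity:finite_uniform_bound} the uniform bound $I(\hattheta_i)\le M$ gives $\Var(\hattheta_m)\le M\sum_{i=n}^{m-1}\epsilon_i^2$, which is finite once the step sizes are square-summable (e.g.\ $\epsilon_i=1/i$). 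In either case $\sup_m\Var(\hattheta_m)<\infty$, so the martingale is $L^2$-bounded.

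With $L^2$-boundedness in hand, Doob's martingale convergence theorem yields $\hattheta_m\to\hattheta_\infty$ almost surely and in $L^2$ for a finite random variable $\hattheta_\infty$; $L^2$-convergence (equivalently, monotone convergence applied to the increasing nonnegative partial sums) gives $\Var(\hattheta_\infty)=\lim_m\Var(\hattheta_m)=\sum_{i=n}^\infty\expect\braces{\epsilon_i^2 I(\hattheta_i)}<\infty$, which is part~\ref{theorem:convergence_finite}. For part~\ref{theorem:convergence_bounded}, substituting $\epsilon_i=1/i$ and $I\le M$ and bounding the tail by $\sum_{i=n}^\infty i^{-2}\le\int_{n-1}^\infty x^{-2}\,dx=1/(n-1)$ gives $\Var(\hattheta_\infty)\le M/(n-1)$. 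For part~\ref{theorem:convergence_accurate}, \ref{regularity:finite_step_sizes} reduces the variance to $\sum_{i=n}^\infty\tau_i^2$ with $\tau_i=L/i$, and the two-sided integral comparison $\tfrac1n=\int_n^\infty x^{-2}\,dx\le\sum_{i=n}^\infty i^{-2}\le\int_{n-1}^\infty x^{-2}\,dx=\tfrac1{n-1}$ sandwiches the variance between the two claimed bounds.

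The main obstacle is not any single sharp estimate but making the martingale structure rigorous at the two points where the regularity conditions are consumed: justifying the interchange of differentiation and integration that forces the conditional mean of the score to vanish (this is where~\ref{regularity:existance}--\ref{regularity:differentiable} enter), and confirming that each increment is genuinely square-integrable so that the orthogonality of martingale differences validates the additive variance decomposition and that $\Var_\theta(s)=I(\theta)$ is well-defined and finite (this is where~\ref{regularity:finite_uniform_bound} or~\ref{regularity:finite_step_sizes} enter). Once the variance identity and the uniform second-moment bound are in place, the almost-sure convergence and the explicit constants follow from standard $L^2$-martingale theory and the elementary integral comparison above.
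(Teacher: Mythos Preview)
Your proposal is correct and follows essentially the same route as the paper's own proof: establish the martingale property via the mean-zero score (justified by dominated convergence from \ref{regularity:existance}--\ref{regularity:differentiable}), use the additive variance decomposition together with either the uniform Fisher-information bound or the self-normalizing step sizes to obtain $L^2$-boundedness, invoke Doob's martingale convergence theorem, and then read off the explicit constants from $\sum_{i\ge n} i^{-2}$. If anything, your version is slightly more careful than the paper's in explicitly stating the $\sup_m$ second-moment bound and spelling out the integral comparison that gives $1/n\le\sum_{i\ge n}i^{-2}\le 1/(n-1)$.
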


\begin{proof}
    From the Lebesgue Dominant Convergence Theorem, and Condition \ref{regularity:existance} and \ref{regularity:differentiable}, we have
    \begin{align}
        E\left(\frac{\partial \log p(X \mid \theta)}{\partial \theta}\right) &= \int \frac{\partial \log p(x\mid  \theta)}{\partial \theta} p(x\mid\theta)\mathrm{d}x\\
        &=\int \frac{1}{p(x\mid\theta)}\frac{\partial  p(x \mid \theta)}{\partial \theta}p(x\mid\theta)\mathrm{d}x\\
        &=\frac{\partial}{\partial \theta}\int p(x\mid\theta)\mathrm{d}x=0.
    \end{align}
    Therefore, $E(\widehat\theta_{m+1}\mid\hattheta_{m})=\widehat\theta_m$ for all $m\geq n$.
    
    Moreover, $\Var\left(\partial \log p(X \mid \theta)/\partial \theta\right)=E_{\theta}\left\{\left(\partial \log p(X \mid \theta)/\partial \theta\right)^{2}\right\}$ if they exist. Therefore, by Condition~\ref{regularity:finite_uniform_bound} or \ref{regularity:finite_step_sizes} and (\ref{equ:var_tower_rule_1}) and (\ref{equ:var_tower_rule_2}), we have
    $$\Var(\hattheta_{m+1})=\sum_{i=n}^m \epsilon_i^2 \Var\left(s(X_{i+1},\hattheta_i)\right)<\infty$$
    for every $m\geq n$. Therefore, $\{\hattheta_m\}_{m=n}^{\infty}$ is a martingale, and by Doob's martingale convergence theorem, see \cite{Doob_1949}, we have that $\hattheta_m$ will converge to a finite random variable $\hattheta_{\infty}$ almost surely with $\Var(\hattheta_{\infty})<\infty$.

    If Condition~\ref{regularity:finite_uniform_bound} is satisfied and we choose $\epsilon_i=1/i,$ we will have
    $$
    \Var(\hattheta_{m+1})\leq\sum_{i=n}^{m}M/i^2
    $$
    for any $m\geq n$.
    By the Monotone Convergence Theorem, we have
    $$
    \Var(\hattheta_{\infty})\leq\sum_{i= n}^{\infty}M/i^2\leq M/(n-1).
    $$
Similarly, if Condition~\ref{regularity:finite_step_sizes} is satisfied and $\tau_i={L}/{i},$ for any $m\geq n$, we have
    $
    \Var(\hattheta_{m+1})=\sum_{i=n}^{m}L/i^2,
    $
    and by the Monotone Convergence Theorem,
    $
    \Var(\hattheta_{\infty})=\sum_{i=n}^{\infty}L/i^2.
    $
    Therefore, we have  $L/n \leq \Var(\hattheta_{\infty}) \leq L/(n-1).$
\end{proof}

\subsection{Multiple dimensional case}
\label{subsec:multi_dim_cases_vector_step}
In this section, we assume $\theta \in \Theta \subset \mathbb{R}^d$, $\{\epsilon_i\}\in \mathbb{R}^d$. Starting with an estimator of the parameter of interest $\hattheta_n = \hattheta(X_1,\ldots,X_n)$, and with the density $p(x\mid\theta)$, the score function $s(x,\theta)$, a vector of dimension $d$, and the step-size $\epsilon_m$, for $m \geq n$, also a vector of size $d$, the updates arise as 
\begin{align}
     x_{m+1} &\sim p(x\mid \hattheta_{m}) \label{equ:algo_resample_vector},\\
     \hattheta_{m+1} &= \hattheta_{m} + \epsilon_{m}\otimes s(x_{m+1},\hattheta_{m}),\label{equ:algo_update_vector}
\end{align}
where $\otimes$ is the Kronecker product.

We can derive that if each component of $\theta$ satisfies the regularity conditions ~\ref{regularity:existance}, ~\ref{regularity:differentiable} and ~\ref{regularity:finite_uniform_bound} or ~\ref{regularity:finite_step_sizes}, the whole vector will also satisfies the part~\ref{theorem:convergence_finite} in Theorem~\ref{theorem:convergence}.

\vspace{0.5 em}
\noindent
Consider $\Theta$ to be an open interval in $\mathbb{R}^d$, $\epsilon_m\in\mathbb{R}^d$ for each $m\geq n$ and the update rule~(\ref{equ:algo_resample_vector}) and~(\ref{equ:algo_update_vector}). We assume:

\begin{enumerate}[label=(RM.1) ,align= left]
\item\label{regularity:multi_existance_vector} 
For each $\theta \in \Theta$, the gradient $\nabla_{\theta} \log p(x \mid \theta)$ exists, for all $x$.
\end{enumerate}

\begin{enumerate}[label=(RM.2) ,align= left]
\item\label{regularity:multi_differentiable_vector} 
There exists a function \(g(x)\), such that for each $k=1,\ldots,d$, $\left|\partial p(x \mid \theta)/\partial \theta_k\right| \leq g(x)$
holds for all $x$ and \(\theta \in \Theta\), and $\int g(x) d x<\infty$.
\end{enumerate}

\begin{enumerate}[label=(RM.3.a) ,align= left]
\item\label{regularity:multi_finite_uniform_bound_vector}
There exists a fixed number $R>0$, such that the Fisher Information
\begin{align*}
    0<\trace\left( E_{\theta}\left\{\left(\frac{\partial \log p(X \mid \theta)}{\partial \theta}\right)\left(\frac{\partial \log p(X \mid \theta)}{\partial \theta}\right)^\intercal\right\}\right)\leq R
\end{align*}
for all $\theta \in \Theta$. 
\end{enumerate}

\begin{enumerate}[label=(RM.3.b) ,align= left]
\item\label{regularity:multi_finite_step_sizes_vector}
For each $\theta \in \Theta$, $E_{\theta}\left\{\left(\frac{\partial \log p(X \mid \theta)}{\partial \theta}\right)\left(\frac{\partial \log p(X \mid \theta)}{\partial \theta}\right)^\intercal\right\}$ exists. Set $$\mathcal{D}(\theta)=\diag\left(E_{\theta}\left\{\left(\frac{\partial \log p(X \mid \theta)}{\partial \theta}\right)\left(\frac{\partial \log p(X \mid \theta)}{\partial \theta}\right)^\intercal\right\}\right)$$ and $\epsilon_{m,k}=\tau_{m,k}/\sqrt{\mathcal{D}(\hattheta_{m})_k}$, $k=1,\ldots,d$. The series $\{\tau_{m,k}\}_{m=n}^\infty$ satisfies $\sum_{m=n}^{\infty}\tau_{m,k}^2<\infty$ for all $k=1,\ldots,d$.
\end{enumerate}

\vspace{0.5 em}
\noindent
The convergence of $\{\hattheta_m\}_{m\geq n}$ is stated in the following theorem.
\begin{theorem}
\label{theorem:multi_convergence_vector}
Consider the Bayesian martingale {posterior given by} (\ref{equ:algo_resample_vector}) and (\ref{equ:algo_update_vector}):
\begin{enumerate}
    \item \label{theorem:convergence_finite_vector}
    If the regularity conditions~\ref{regularity:multi_existance_vector} and~\ref{regularity:multi_differentiable_vector} are satisfied, and either~\ref{regularity:multi_finite_uniform_bound_vector} or~\ref{regularity:multi_finite_step_sizes_vector} is satisfied, then the $\{\hattheta_m\}$ sequence will converge to a finite random vector $\hattheta_{\infty}$ almost surely, with all elements of the variance matrix $\Var(\hattheta_{\infty})$ finite.
    
    \item \label{theorem:convergence_bounded_vector}
    If the regularity conditions ~\ref{regularity:multi_existance_vector}, ~\ref{regularity:multi_differentiable_vector} and ~\ref{regularity:multi_finite_uniform_bound_vector} are satisfied, and $\epsilon_{m,k}={1}/{m}$, then $\Var(\hattheta_{\infty})_{k,k}\leq {R}/({n-1})$ for all $k=1,\ldots,d$.
    
    \item \label{theorem:convergence_accurate_vector}
    If the regularity conditions ~\ref{regularity:multi_existance_vector}, ~\ref{regularity:multi_differentiable_vector} and ~\ref{regularity:multi_finite_step_sizes_vector} are satisfied, and $\tau_{m,k}=L_k/m$ where $L_k$ are constants only depending on $k$, then 
    \begin{align*}
        L_k/n \leq \Var(\widehat\theta_{\infty})_{k,k} \leq L_k/(n-1)
    \end{align*}
    for all $k=1,\ldots,d$.
\end{enumerate}
\end{theorem}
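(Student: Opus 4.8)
The plan is to reduce the multivariate statement to a coordinatewise application of the argument already used for Theorem~\ref{theorem:convergence}. Because the step is taken componentwise, each coordinate evolves as
$$\hattheta_{m+1,k} = \hattheta_{m,k} + \epsilon_{m,k}\,s_k(x_{m+1},\hattheta_m), \qquad k=1,\dots,d,$$
so it suffices to show that each scalar sequence $\{\hattheta_{m,k}\}_{m\ge n}$ is an $L^2$-bounded martingale and then to assemble the coordinates. First I would establish the vector zero-mean property $E_{\theta}\{s(X,\theta)\}=0$, which holds componentwise by exactly the dominated-convergence argument of Theorem~\ref{theorem:convergence}, now invoking~\ref{regularity:multi_existance_vector} and~\ref{regularity:multi_differentiable_vector} for each partial derivative $\partial \log p(x\mid\theta)/\partial\theta_k$. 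Since $x_{m+1}\sim p(\cdot\mid\hattheta_m)$, this gives $E(\hattheta_{m+1}\mid\hattheta_m)=\hattheta_m$, so every coordinate process is a martingale.

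Next I would set up the variance recursion for each diagonal entry. Conditioning on $\hattheta_m$ and using that the increment in coordinate $k$ has conditional variance $\epsilon_{m,k}^2\,\Var(s_k(x_{m+1},\hattheta_m))$, together with the identity $\Var(s_k(X,\theta))=E_\theta\{s_k^2\}=\mathcal{D}(\theta)_k$, the tower rule yields
$$\Var(\hattheta_{m+1})_{k,k} = \sum_{i=n}^m E\{\epsilon_{i,k}^2\,\mathcal{D}(\hattheta_i)_k\},$$
the direct analogue of~(\ref{equ:var_tower_rule_1})--(\ref{equ:var_tower_rule_2}). The key new ingredient is to pass from the trace bound in~\ref{regularity:multi_finite_uniform_bound_vector} to a per-coordinate bound: since the Fisher information matrix is positive semidefinite, each diagonal entry is nonnegative and dominated by the trace, so $0\le \mathcal{D}(\theta)_k \le \trace(\mathcal{I}(\theta))\le R$ for every $k$. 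Thus~\ref{regularity:multi_finite_uniform_bound_vector} supplies, for each coordinate, the scalar uniform bound of~\ref{regularity:finite_uniform_bound} with $M=R$, while~\ref{regularity:multi_finite_step_sizes_vector} supplies~\ref{regularity:finite_step_sizes} through the cancellation $\epsilon_{i,k}^2\,\mathcal{D}(\hattheta_i)_k=\tau_{i,k}^2$.

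With these reductions in hand, each part follows as in the one-dimensional theorem. For part~\ref{theorem:convergence_finite_vector}, under~\ref{regularity:multi_finite_uniform_bound_vector} (as in part~\ref{theorem:convergence_bounded_vector}, with $\epsilon_{m,k}=1/m$) or~\ref{regularity:multi_finite_step_sizes_vector}, each coordinate is an $L^2$-bounded martingale, so Doob's convergence theorem gives almost sure convergence of $\hattheta_{m,k}$ to a finite limit with finite variance; since there are finitely many coordinates, $\hattheta_m\to\hattheta_\infty$ almost surely, and the off-diagonal entries of $\Var(\hattheta_\infty)$ are finite by Cauchy--Schwarz, $|\Var(\hattheta_\infty)_{j,k}|\le \sqrt{\Var(\hattheta_\infty)_{j,j}\,\Var(\hattheta_\infty)_{k,k}}$. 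For part~\ref{theorem:convergence_bounded_vector}, taking $\epsilon_{m,k}=1/m$ and using $\mathcal{D}(\hattheta_i)_k\le R$ gives $\Var(\hattheta_{m+1})_{k,k}\le\sum_{i=n}^m R/i^2$, and monotone convergence with $\sum_{i=n}^\infty 1/i^2\le 1/(n-1)$ yields $R/(n-1)$. For part~\ref{theorem:convergence_accurate_vector}, the cancellation gives $\Var(\hattheta_{m+1})_{k,k}=\sum_{i=n}^m\tau_{i,k}^2$, and substituting $\tau_{i,k}=L_k/i$ and invoking $1/n\le\sum_{i=n}^\infty 1/i^2\le 1/(n-1)$ reproduces the sandwich of Theorem~\ref{theorem:convergence} part~\ref{theorem:convergence_accurate} for the $k$-th diagonal entry.

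I expect the main obstacle to be conceptual rather than computational: the scalar sequences are \emph{not} decoupled, since $s_k(x,\theta)$ depends on the whole vector $\theta$ and the resampling $x_{m+1}\sim p(\cdot\mid\hattheta_m)$ couples all coordinates. One must therefore verify that the coordinatewise martingale and conditional-variance identities rely only on the joint conditional law of $x_{m+1}$ given $\hattheta_m$, not on any independence across coordinates; the correlations are harmless for the diagonal variance bounds and are controlled only at the final step via Cauchy--Schwarz. The one genuinely new algebraic fact is the domination of each Fisher-information diagonal entry by the trace, which is precisely what makes~\ref{regularity:multi_finite_uniform_bound_vector} the correct multivariate surrogate for~\ref{regularity:finite_uniform_bound}.
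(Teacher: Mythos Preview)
Your proposal is correct and follows essentially the same approach as the paper: reduce to coordinatewise martingales via the zero-mean score, bound each diagonal variance using the one-dimensional argument, and handle off-diagonal entries by Cauchy--Schwarz before invoking Doob's convergence theorem. You are in fact more explicit than the paper on two points it leaves implicit---the passage from the trace bound in~\ref{regularity:multi_finite_uniform_bound_vector} to a per-coordinate bound via positive semidefiniteness, and the observation that the coupling of coordinates through the common resampling step does not affect the conditional-variance identities.
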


\begin{proof}
    The proof that $\{\hattheta_{m,j}\}_{m=n}^{\infty}$ is a martingale for every $j=1,\ldots,d$ is the same as that for Theorem~\ref{theorem:convergence}, where $\hattheta_{m,j}$ means the $j$-th element of $\hattheta_{m}$. For every $j$, $\hattheta_{m,j}$ will converge to a random variable $\hattheta_{\infty,j}$ as $m\to\infty$ with a finite expectation and variance. So every $\hattheta_{m,j}\hattheta_{m,k}$ also converges to a random variable with a finite expectation for every $j$ and $k$. Since $\{\mathrm{Cov}(X,Y)\}^2\leq\Var(X)\Var(Y)$ holds for any random variables $X$ and $Y$, the variance-covariance matrix $\Var(\hattheta_{\infty})$ is finite. Therefore, the process of the random vector is a martingale and $\{\hattheta_m\}$ is convergent almost surely by Doob's martingale convergence theorem. 
    
    The remainder of the proof 
    is identical as that for Theorem~\ref{theorem:convergence}, when considering the vector process element-wise. 
\end{proof}

By choosing  suitable step sizes, we can get a posterior with mean as the original estimator and the variance to be $(n I(\widehat\theta_n))^{-1}$, matching the Bernstein-von Mises theorem, which was first established in~\cite{Doob_1949}. The Bernstein-von Mises theorem is an asymptotic results requiring a large sample size $n$ for the approximate normality of the posterior. Our method obtains the same mean and variance for all $n$  without considering normality. See~\cite{Vaart_1998} for more details about the Bernstein-von Mises theorem.

\subsection{Asymptotic exchangeability}
\label{subsec:Asymptotic Exchangeability}

Having established that $\widehat\theta_m\to\widehat\theta_\infty$ a.s., we can define $\mathcal{P}_m$ to be the probability measure for $\widehat\theta_m$ and $\mathcal{P}_\infty$ the probability measure for $\widehat\theta_\infty$ and $\mathcal{P}_m\to\mathcal{P}_{\infty}$ according to weak convergence of probability measures. See \cite{Berti_2006} and \cite{Lijoi_2007}.  Then by Lemma 8.2(b) in~\cite{Aldous_1985}, $(X_{n+1:\infty})$ where $X_{m+1}\sim p(\cdot\mid\widehat\theta_m)$ is an asymptotically exchangeable sequence, in the sense that there exists an exchangeable sequence $(Z_{1:\infty})$ for which
$$(X_{m+1},X_{m+2},\ldots)\to_d (Z_1,Z_2,\ldots),$$
as $m\to\infty.$ Indeed, the $(Z_i)$ are i.i.d. from $p(\cdot\mid \theta)$ conditional upon $\theta$ with $\theta\sim\mathcal{P}_\infty$. For the mixture model, we may have a sequence of conditionally identically distributed (c.i.d.) predictive samples, which infers the asymptotic exchangeability; see~\cite{Fortini_2020} and~\cite{Cui_2023}.


\section{Illustrations}
\label{sec:illustration}

Here we consider a number of examples, starting with some simple cases and then moving on to multidimensional models 
including a time series model. The illustrations will use the Algorithm~\ref{algo:bayesian_martingale_bootstrap_with_scores} as the means by which to obtain the posterior samples.

\begin{algorithm}[ht!]
\DontPrintSemicolon
  
  \KwInput{Data collected $\{X_1, \ldots, X_n\}$, where $n$ is the sample size; the model is $p(x\mid\theta)$; the initial estimator for the parameter is $\widehat\theta_n$, the score function is $s(x,\theta)$, and the step-size $\epsilon_m=1/(m+1)$ for $m=n,\ldots,T$, where $T$ is length run of each martingale. $D$ will be the number of martingales generated, and hence will also be the number of samples from the posterior.}
  Set $\hattheta_{ln}=\widehat{\theta}_n$ for $l = 1,\ldots,D$\\
  \For{$l= 1,2,\ldots,D,$ \textbf{in parallel}}{
  \For{$m=n,n+1,\ldots,T$}{
  $X_{l,m+1} \sim p(x\mid \widehat\theta_{l,m})$\\
  $\widehat\theta_{l,m+1}=\widehat\theta_{l,m}+\epsilon_{m} s(X_{l,m+1}, \widehat\theta_{l,m})$
  }
  Set $\widehat\theta_l=\widehat\theta_{l,T}$
  }
  \KwOutput{$\{\widehat\theta_l\}_{l=1}^D$}
\caption{Parametric martingale with score function}
\label{algo:bayesian_martingale_bootstrap_with_scores}
\end{algorithm}

\subsection{Exponential family}
\label{subsec:exponential_families}

Consider the family
\begin{align}
    \label{equ:exponential family}
    p(x\mid\theta)=h(x) \exp\left\{\sum_{j=1}^d \theta_j\,\phi_j(x)-\beta(\theta)\right\}
\end{align}
where the $(\phi_j)$ are known functions. 
The score functions are given by
$$s_j(x,\theta)=(\partial/\partial\theta_j)\log p(x\mid \theta)=\phi_j(x)-\partial \beta(\theta)/\partial\theta_j.$$
We then update $\theta_j$ given observation $X_{m+1}=x_{m+1}$ from $p(\cdot\mid\theta_m)$ via
\begin{align}
\label{equ:update_analytic}
    \theta_{m+1,j}=\theta_{m,j}+\epsilon_m\,\left\{\phi_j(x_{m+1})-\left.\frac{\partial \beta(\theta)}{\partial\theta_j}\right|_{\theta=\theta_{m}}\right\},
\end{align}
for some step size $\epsilon_m$. The $\beta_j'(\theta)=\partial \beta(\theta)/\partial\theta_j$ may be difficult to obtain, but it can be arbitrarily well approximated using
$$\widehat{\beta}'_j(\theta)=N^{-1}\sum_{l=1}^N \phi_j(X^*_l),$$
for some large $N$,
where the $(X^*_l = x^*_l)$ are sampled i.i.d. from $p(\cdot\mid\theta_m)$. Hence, we can update, sequentially, over $m\geq n$, and iteratively, over $j=1,\ldots,d$, 
\begin{align}
    \label{equ:update_approx}
    \theta_{m+1,j}=\theta_{m,j}+\epsilon_m\,\left\{\phi_{j}(x_{m+1})-N^{-1}\sum_{l=1}^N \phi_j(x^*_l)\right\}.
\end{align}
Without the covariance matrix update, this would be analogous to a mean field variational style posterior distribution.

Notice that $\phi(x)=(\phi_1(x),\ldots,\phi_d(x))$ in (\ref{equ:exponential family}) is the sufficient statistic for $\theta=(\theta_1,\ldots,\theta_d)$. Therefore, it makes sense that the distribution of the next updated $\theta_m$ only depends on the distribution of the sufficient statistic $\phi(x)$.


\subsubsection{Normal distribution with known variance}
\label{subsubsec:Normal Distribution with the Variance Known}

First we present a simple toy example with the normal distribution with the variance known. 
The update rule from (\ref{equ:update_analytic}) becomes
\begin{align}
    \label{equ:update_normal_mean_analytical}
    \mu_{m+1}=\mu_m+\epsilon_m(X_{m+1}-\mu_m)
\end{align}
with $X_{m+1}\sim \mathcal{N}( \mu_m,\sigma^2)$ for $m>n$.


For the illustration, we collect $n=100$ independent samples from $\mathcal{N}(2,1)$. We use the maximum likelihood estimate $\widehat{\mu}_{100}=\mu_{100}=2.016$ as the starting point for each martingale sequence. The martingale updates for $\mu$ are as in (\ref{equ:update_analytic}) {for which we update the scores by the exact analytical form of $\partial \beta(\theta)/\partial \theta_j$, and (\ref{equ:update_approx}) where we use the Monte Carlo method to approximate $\partial \beta(\theta)/\partial \theta_j$}. The number of posterior samples we take is $D=1000$, and each martingale is run for $T=2000$ iterations. 
See Figure~\ref{pl:Gaussian_mean_varknown} and Table~\ref{tab:Gaussian_mean_varknown} for the results. Both methods have posterior means approximately $\mu_{100}$, as they should, and standard deviation $1/\sqrt{n}\approx0.1$. 

\begin{figure}[ht!]
\centering
\subfigure[]{
\label{pl:Gaussian_mean_analytical}
\includegraphics[width=0.45\textwidth]{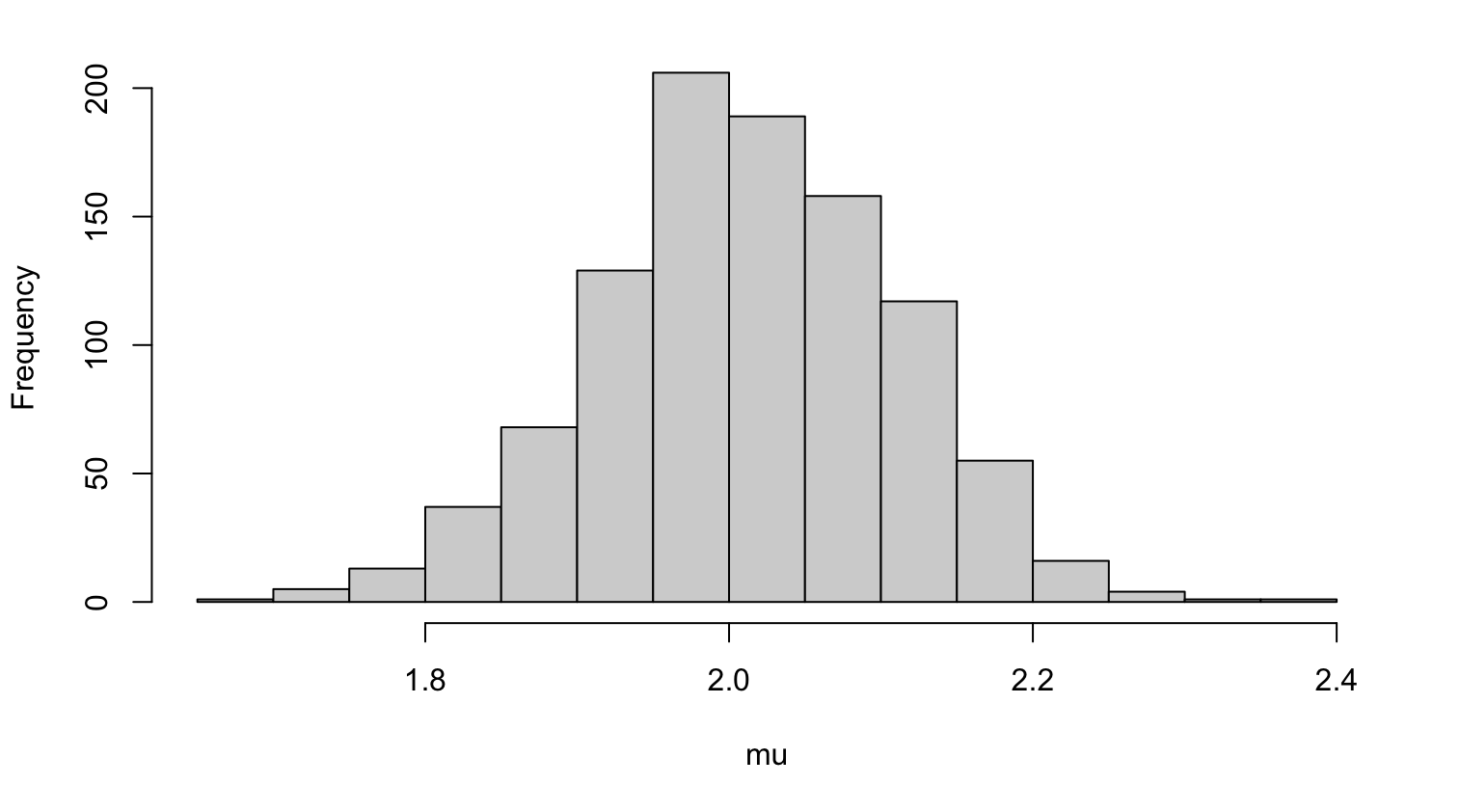}}
\subfigure[]{
\label{pl:Gaussian_mean_approximated}
\includegraphics[width=0.45\textwidth]{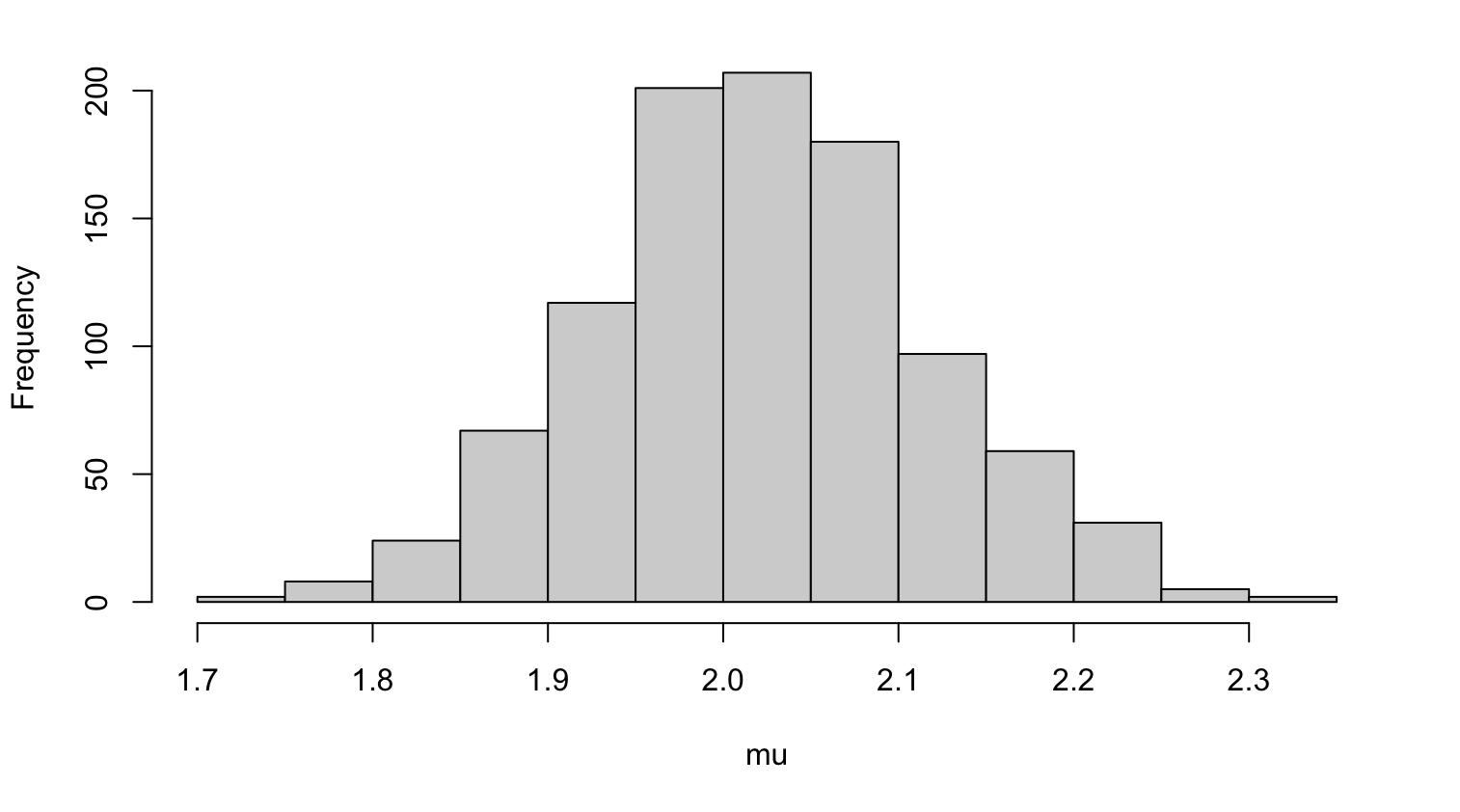}}
\caption{Histograms of posterior samples of $\mu$: (a) the method with exact score update; (b) the method with Monte Carlo approximate score update.}
\label{pl:Gaussian_mean_varknown}
\end{figure}

\begin{table}[ht!]
    \centering
    {\renewcommand{\arraystretch}{2}
    \begin{tabular}{ccc}
    \bottomrule \hline
    Score-update & Mean & Standard Deviation \\
    \Centerstack{Exact } & \Centerstack{2.017} & \Centerstack{0.098} \\
    \Centerstack{Approximate }   & \Centerstack{2.019} & \Centerstack{0.097} \\
    [2ex] \toprule
    \end{tabular}
    }
    \caption{Means and standard deviations for martingale posterior samples of $\widehat{\mu}_{2100}$. {The first row shows the results from the algorithm updated with the exact score functions. The second row presents the results from the algorithm updated by the Monte Carlo approximation to the score functions.}}
    \label{tab:Gaussian_mean_varknown}
\end{table}

\subsubsection{Normal distribution with the mean and variance unknown}
\label{subsubsec:Normal Distribution with the Mean and Variance Unknown}
We now consider the normal distribution with both the mean $\mu$ and variance $\sigma^2$ unknown. Considering the natural parameters $\theta_1$ and $\theta_2$, we have $\theta_1={\mu}/{\sigma^2}$, $\theta_2=-{1}/({2\sigma^2})$, $\phi_1=x$, $\phi_2=x^2$ and $\beta=-{\theta_{1}^{2}}/({4 \theta_{2}})-\log \left(-2 \theta_{2}\right)/2$. With the analytical form of the partial derivatives of $\beta$, the update formula in (\ref{equ:update_analytic}) is
\begin{align}
    \label{equ:update_normal_mean_var_analytical}
    \theta_{m+1,1}&=\theta_{m,1}+\epsilon_m\left\{X_{m+1}-\left(-\frac{\theta_{m,1}}{2\theta_{m,2}}\right)\right\},\\
    \theta_{m+1,2}&=\theta_{m,2}+\epsilon_m\left\{X_{m+1}-\left(\frac{\theta_{m,1}^2}{4\theta_{m,2}^2}-\frac{1}{2\theta_{m,2}}\right)\right\},
\end{align}
with $X_{m+1}\sim \mathcal{N}(\mu_m,\sigma_m^2)$ and $m\geq n$.

Or if using the approximated version with Monte Carlo methods of the partial derivatives of $b$, the update rule (\ref{equ:update_approx}) becomes
\begin{align}
    \label{equ:update_normal_mean_var_approx}
    \theta_{m+1,1}=\theta_{m,1}+\epsilon_m(X_{m+1}-\overline{Z}_m),\\
    \theta_{m+1,2}=\theta_{m,2}+\epsilon_m(X_{m+1}^2-\overline{Z^2}_m),
\end{align}
where $X_{m+1}, Z_1, \ldots Z_N \stackrel{iid}{\sim} \mathcal{N}(\mu_m,\sigma_m^2)$, $\overline{Z}_m={\sum_{i=1}^N Z_i}/{N}$, $\overline{Z^2}_m={\sum_{i=1}^N Z_i^2}/{N}$ and $m\geq n$.

In our experiments, we collect $n=100$ data from $X_{i}\stackrel{iid}{\sim}\mathcal{N}(2,0.5)$. We then use $\widehat{\mu}={\sum_{i=1}^n X_i}/{n}=2.008$ and $$\widehat{\sigma}=\sqrt{{\sum_{i=1}^n (X_i-\widehat{\mu})^2}/({n-1})}=0.503$$ as the starting point for the martingales. We take the number of martingales to be $D=1000$, and each run for  $T=1000$ iterations, which means we sample $1000$ number of $\theta_{1100}$. See Figure~\ref{pl:Gaussian_mv} and Table~\ref{tab:Gaussian_mv} for the results. {As above, we again show the two methods with the scores updated using different forms, in which one uses the exact score function while the other uses a Monte Carlo approximation to the score function.}

\begin{figure}[ht!]
\centering
\subfigure[]{
\label{pl:Gaussian_mv_mu_analytical}
\includegraphics[width=0.45\textwidth]{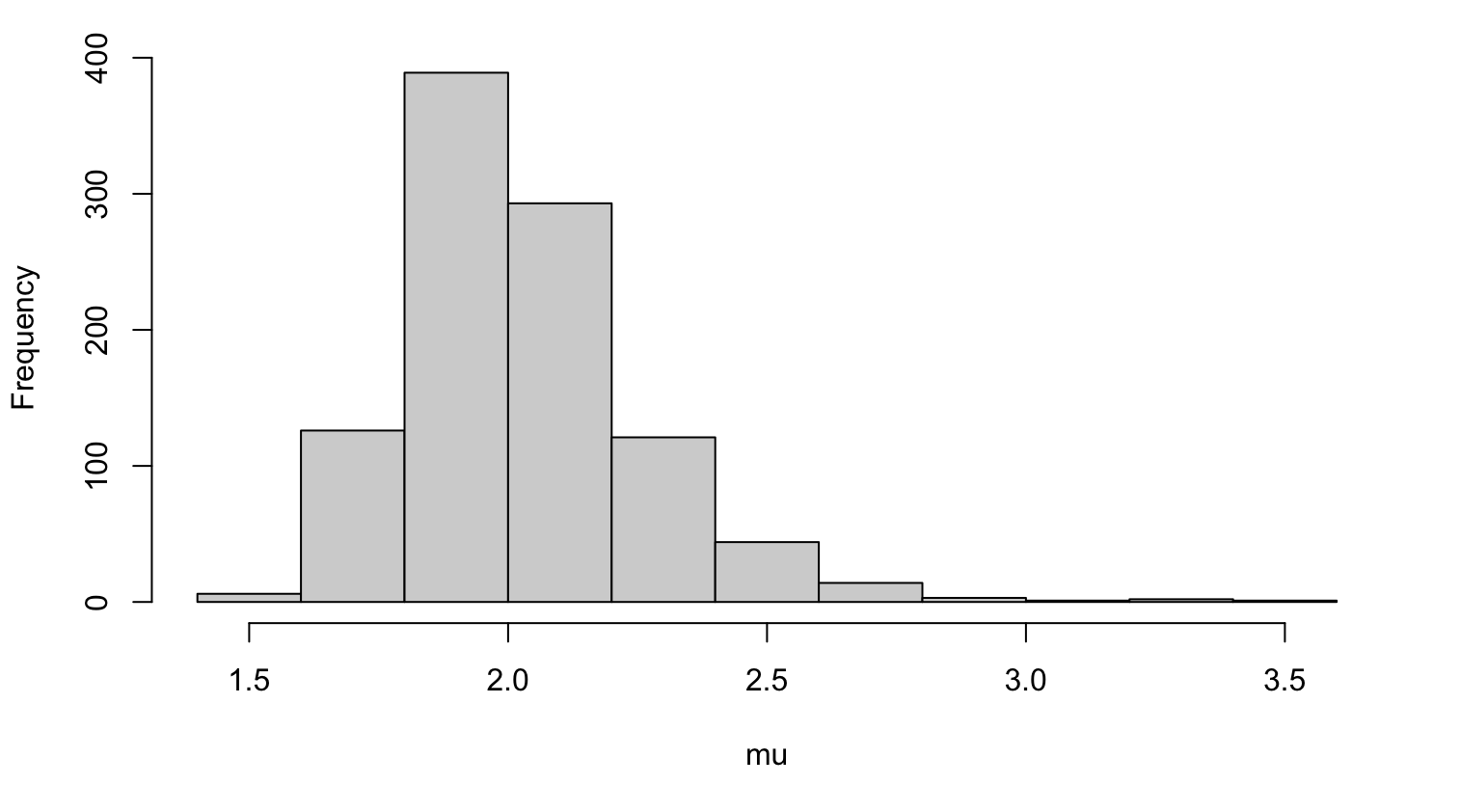}}
\subfigure[]{
\label{pl:Gaussian_mv_sigma_analytical}
\includegraphics[width=0.45\textwidth]{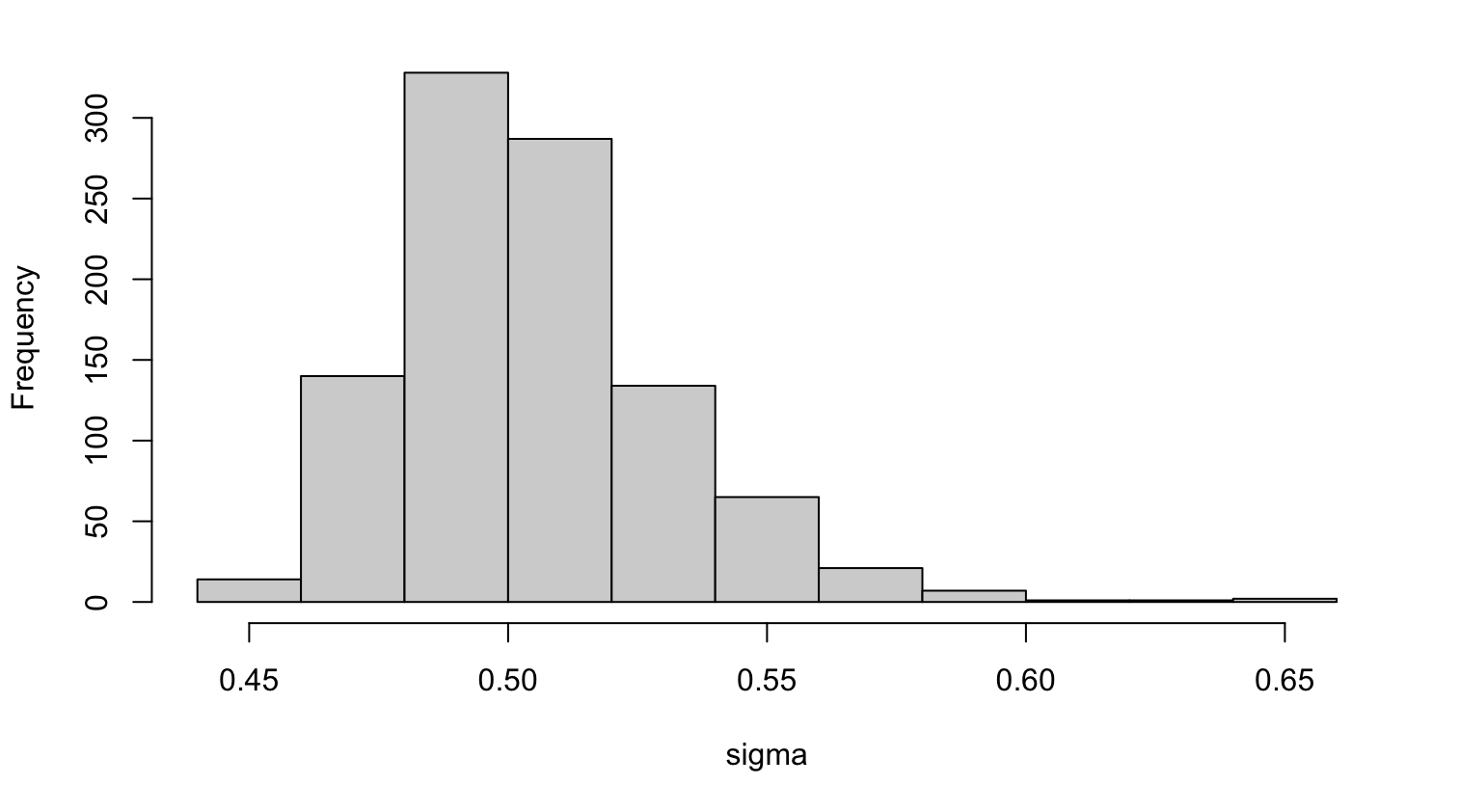}}
\subfigure[]{
\label{pl:Gaussian_mv_mu_approximated}
\includegraphics[width=0.45\textwidth]{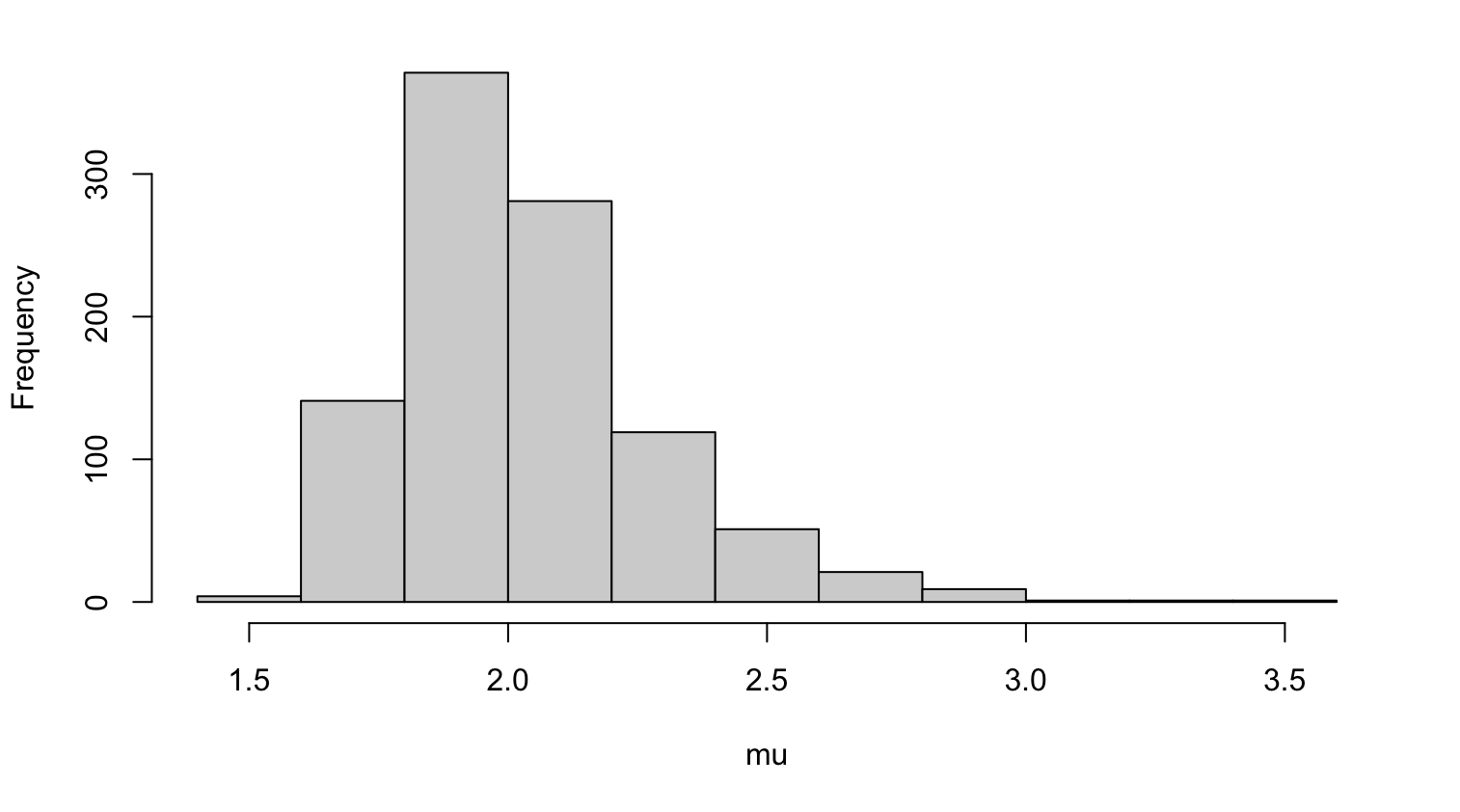}}
\subfigure[]{
\label{pl:Gaussian_mv_sigma_approximated}
\includegraphics[width=0.45\textwidth]{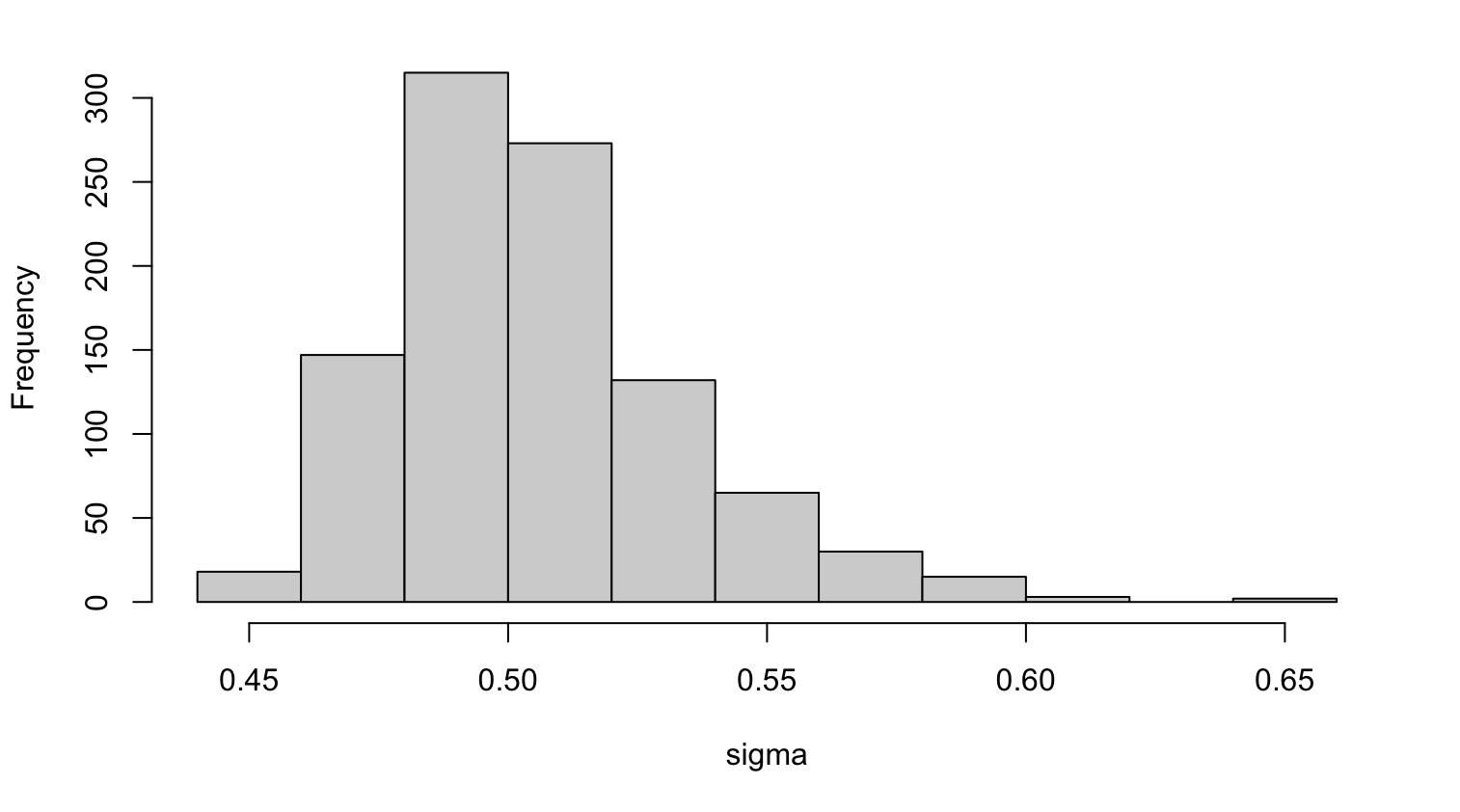}}
\caption{Histograms of martingale posterior samples of $\theta_{1100}$: (a)  mean with analytical ${\partial \beta}/{\partial \theta_j}$; (b) standard deviation with analytical ${\partial \beta}/{\partial \theta_j}$; (c) mean with approximated ${\partial \beta}/{\partial \theta_j}$; (d) standard deviation with approximated ${\partial \beta}/{\partial \theta_j}$.}
\label{pl:Gaussian_mv}
\end{figure}

\begin{table}[ht!]
    \centering
    {\renewcommand{\arraystretch}{2}
    \begin{tabular}{cccc}
    \bottomrule \hline
    Score-update & Parameter &Mean & Standard Deviation \\
    \Centerstack{Exact} &$\mu$ & \Centerstack{2.024} & \Centerstack{0.230} \\
    \Centerstack{Exact} &$\sigma$  & \Centerstack{0.504} & \Centerstack{0.026} \\
    \Centerstack{Approximate} &$\mu$   & \Centerstack{2.033} & \Centerstack{0.250} \\
    \Centerstack{Approximate} &$\sigma$   & \Centerstack{0.505} & \Centerstack{0.029} \\
    [2ex] \toprule
    \end{tabular}
    }
    \caption{Means and standard deviations for martingale posterior samples of $\widehat{\mu}_{1100}$ and $\widehat{\sigma}_{1100}$.}
    \label{tab:Gaussian_mv}
\end{table}

\subsubsection{Beta distribution}
\label{subsubsec:Beta Distribution}
We also consider the Beta distributions, whose partial derivatives of $\beta$ need to estimate the integrals. We use (\ref{equ:update_approx}) to obtain martingale posterior samples. Suppose the distribution is $\mathrm{Beta}(a,b)$ with $a$ and $b$ both unknown, and 
$$
p(x\mid a,b)=\frac{1}{\mathrm{B}(a, b)} x^{a-1}(1-x)^{b-1}
$$
where $\mathrm{B}(a, b)$ is the Beta function and $a>0, b>0$.

We still consider the natural parameters $\theta$: $\theta_1=a, \theta_2=b, \phi_1(x)=\log x, \phi_2(x)=\log(1-x),$ and $\beta(\theta)=\log \Gamma(a)+\log \Gamma(b)-\log \Gamma(a+b)$ where $\Gamma$ is the Gamma function. Therefore, using (\ref{equ:update_approx}) {where we need to use Monte Carlo to approximate the score function}, the update rule is
\begin{align}
    \label{equ:update_beta}
    a_{m+1}&=a_m+\epsilon_m\left(\log X_{m+1}-\overline{\log Z}_m\right),\\
    b_{m+1}&=b_{m}+\epsilon_m\left(\log (1-X_{m+1})-\overline{\log (1-Z)}_m\right),
\end{align}
with $X_{m+1}, Z_1, \ldots, Z_N \stackrel{iid}{\sim} \mathrm{Beta}(a_m,b_m)$ and
\begin{align*}
    \overline{\log Z}_m&=\frac{\sum_{i=1}^N \log Z_i}{N},\\
    \overline{\log (1-Z)}_m&=\frac{\sum_{i=1}^N \log (1-Z_i)}{N}
\end{align*}
where $m\geq n$.

In our experiments, we collect $n=500$ data from $X_{i}\stackrel{iid}{\sim}\mathrm{Beta}(2,1)$. And then we use the methods of moments (MM) to estimate $a$ and $b$, and we get $\widehat{a}=2.106, \widehat{b}=1.033$ as the starting point for each martingale. We {sample from the martingale posterior} on the parameters with $T=1500$ and $D=1000$, which means we sample $1000$ amount of $\theta_{2000}$. See Figure~\ref{pl:Beta} and Table~\ref{tab:Beta} for the results.

\begin{figure}[ht!]
\centering
\subfigure[]{
\label{pl:Beta_a}
\includegraphics[width=0.45\textwidth]{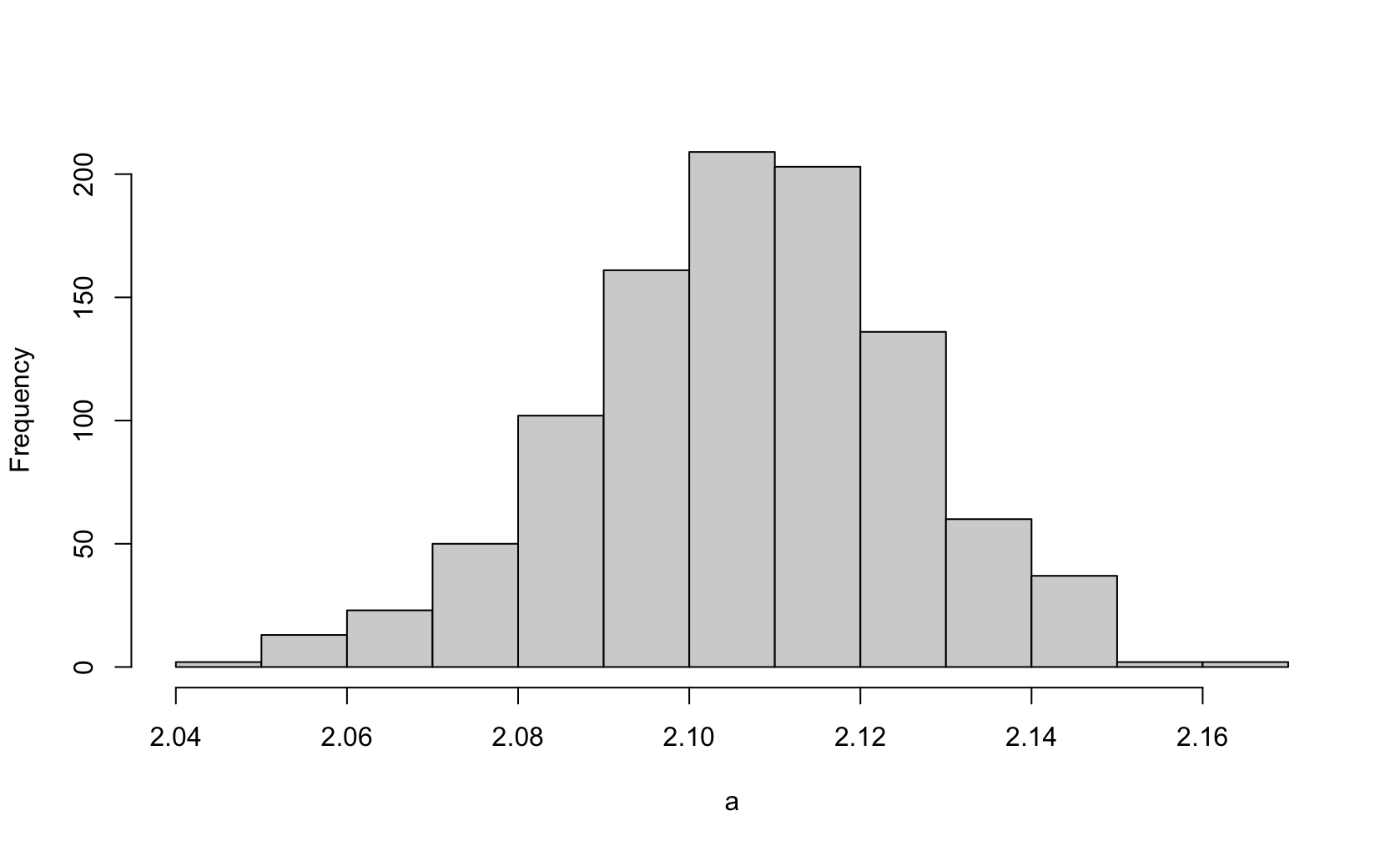}}
\subfigure[]{
\label{pl:Beta_b}
\includegraphics[width=0.45\textwidth]{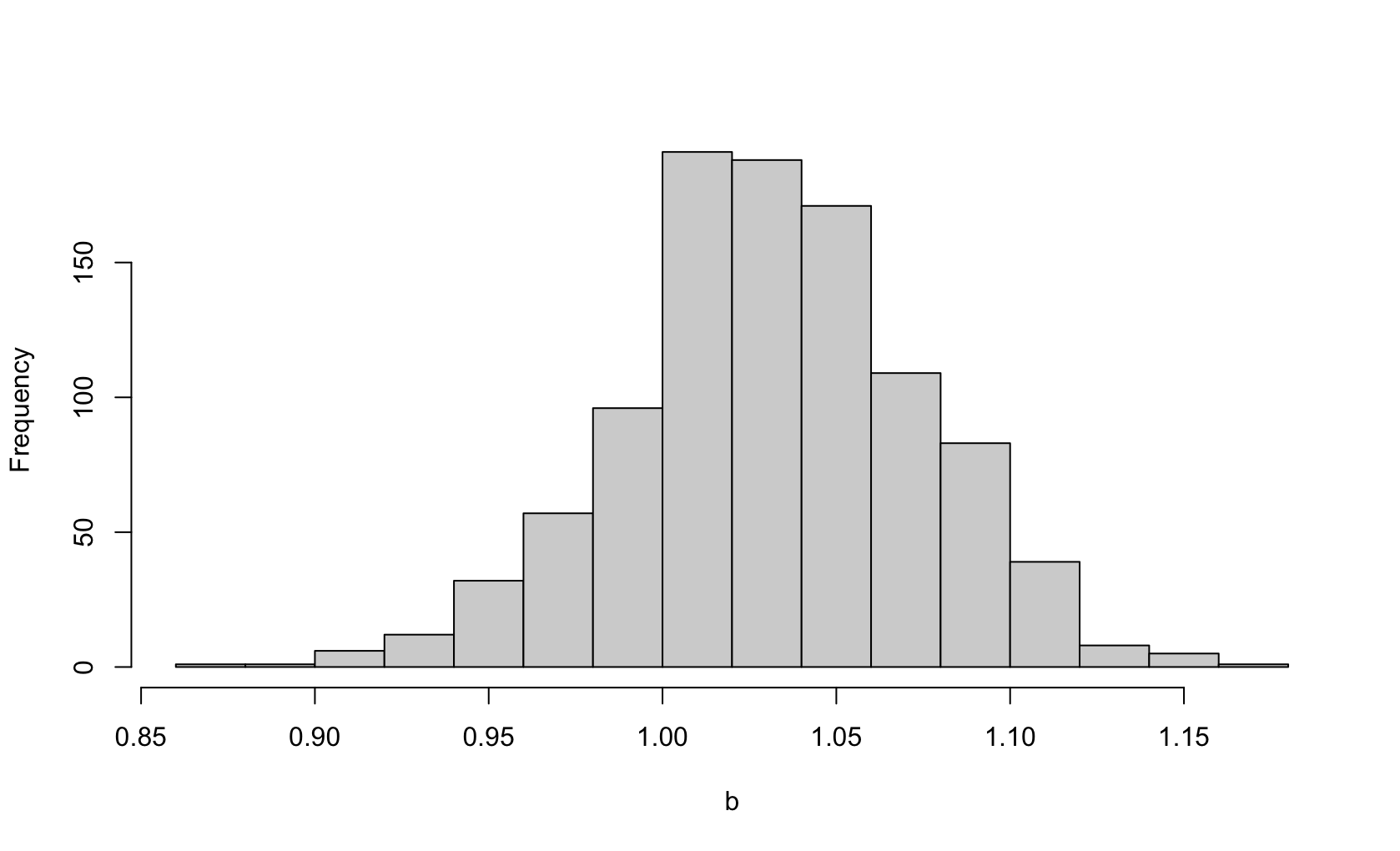}}
\caption{Histograms of martingale posterior samples of $\theta_{2000}$: (a) $a$; (b)  $b$.}
\label{pl:Beta}
\end{figure}

\begin{table}[ht!]
    \centering
    {\renewcommand{\arraystretch}{2}
    \begin{tabular}{ccc}
    \bottomrule \hline
    Parameter & Mean & Standard Deviation \\
    \Centerstack{a} & \Centerstack{2.106} & \Centerstack{0.019} \\
    \Centerstack{b}   & \Centerstack{1.032} & \Centerstack{0.043} \\
    [2ex] \toprule
    \end{tabular}
    }
    \caption{Means and standard deviations for posterior samples of $\theta_{2000}$.}
    \label{tab:Beta}
\end{table}

{The results above demonstrate that our method performs effectively with exponential family distributions. Although deriving the exact analytical form of the score function can sometimes be challenging, Monte Carlo methods can approximate the scores, giving very good results. Our approach successfully constructs the score function and samples efficiently from the corresponding martingale posterior in parallel. In general, performing Bayesian inference on exponential families can be difficult due to the complexity of sampling from the posterior, often needing non-trivial MCMC techniques such as the Metropolis–Hastings algorithm.}

\subsection{Further illustrations}


\subsubsection{Gaussian distribution with mean unknown}
We sample $X_1, \ldots, X_n$ i.i.d. from the distribution $ \mathcal{N}(\theta,1)$. We compare the following two methods: 1) Martingale posterior with update of $\hattheta_n$ using the posterior mean{, which is the MLE in this case and equivalent to the classical Bayesian scheme by Doob's result}; 2) Score function approach choosing the step size $\epsilon_m={1}/({m+1})$ and the score function $s$. Here, we set the true value of parameter $\theta^*=1$, $n=100$, and the {martingale posterior} sample $D=1000$. For each martingale sample, the iteration number is $T=1000$, that is we sample $1000$ of $\hattheta_{1100}$.

\begin{figure}[ht!]
\centering
\subfigure[]{
\label{pl:Gaussian_mean_Doob}
\includegraphics[width=0.45\textwidth]{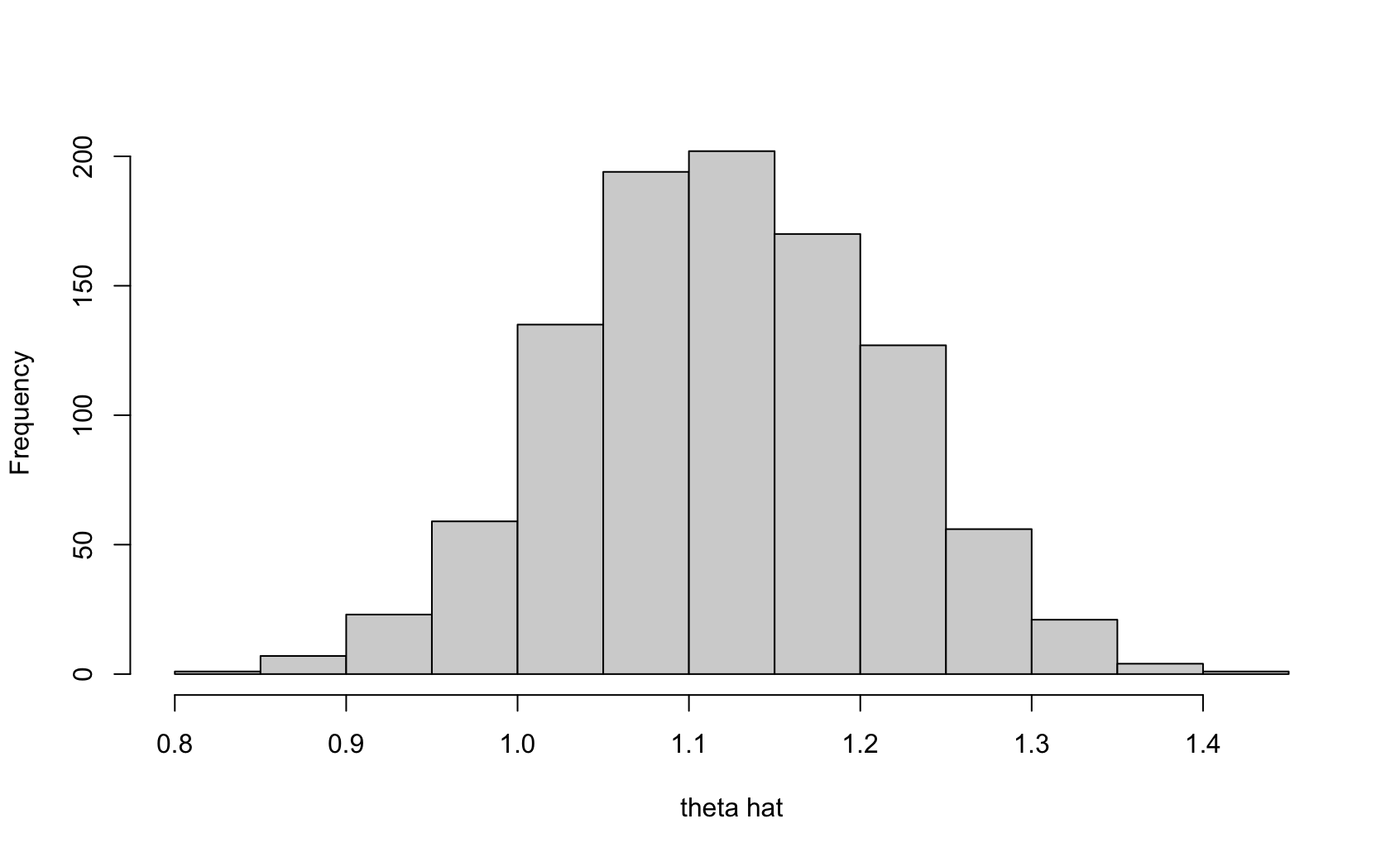}}
\subfigure[]{
\label{pl:Gaussian_mean_our}
\includegraphics[width=0.45\textwidth]{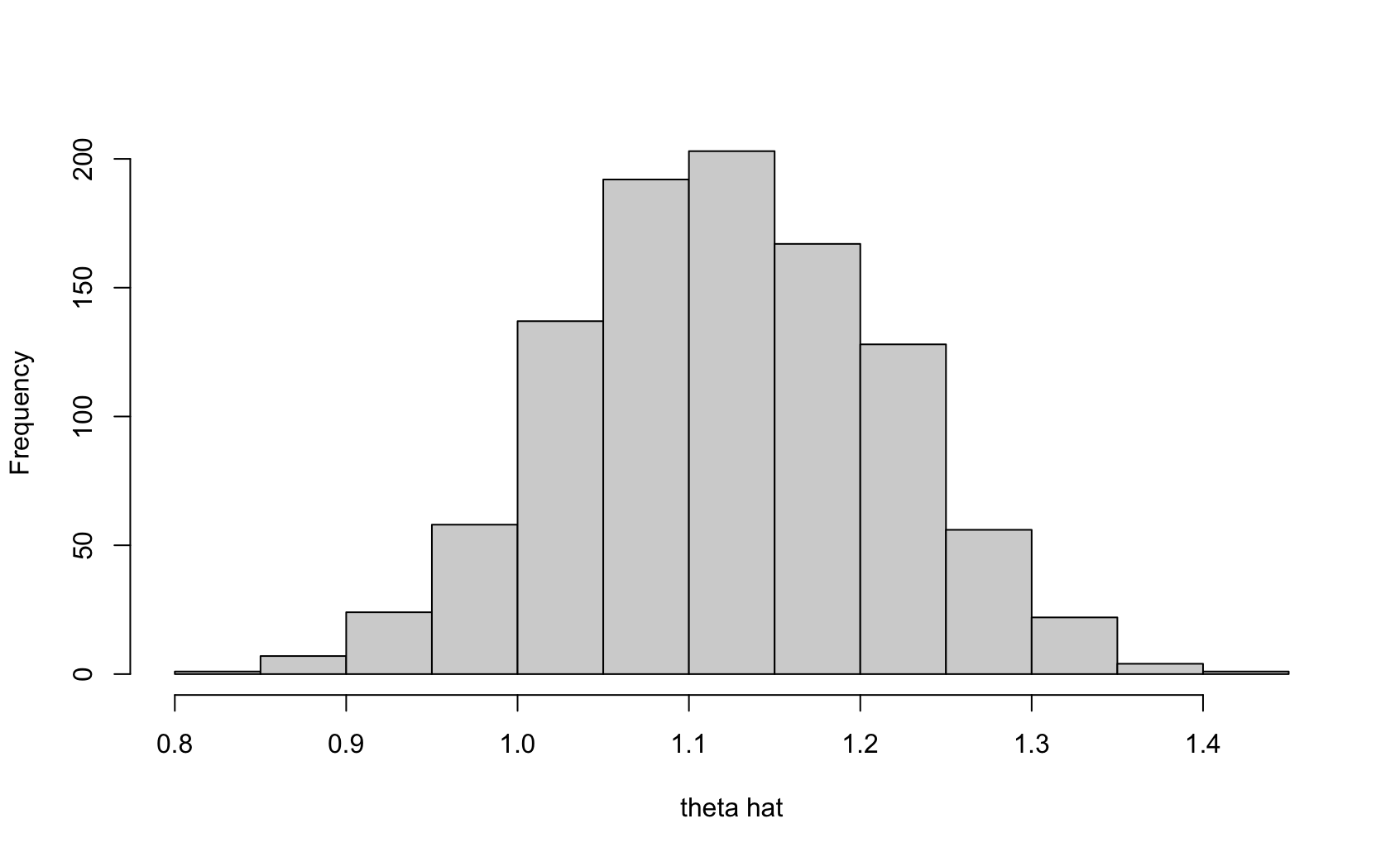}}
\caption{Histograms of posterior samples of $\hattheta_{1100}$: (a) Classical Bayesian method; (b) Score function method.}
\label{pl:Gaussian_mean}
\end{figure}

\begin{table}[ht!]
    \centering
    {\renewcommand{\arraystretch}{2}
    \begin{tabular}{ccc}
    \bottomrule \hline
    Method & Mean & Standard Deviation \\
    \Centerstack{Bayesian} & \Centerstack{1.122} & \Centerstack{0.092} \\
    \Centerstack{Score function}   & \Centerstack{1.122} & \Centerstack{0.093} \\
    [2ex] \toprule
    \end{tabular}
    }
    \caption{Means and deviations for {the martingale posterior} samples of $\hattheta_{1100}$.}
    \label{tab:Gaussian_mean}
\end{table}

\vspace{0.5 em}
\noindent
The relevant histogram is shown in Figure \ref{pl:Gaussian_mean} and the numerical results are given in Table \ref{tab:Gaussian_mean}. We can see the results are similar.

This is the special case when the MLE of $\theta$ is just the sample mean, so with the martingale posterior method we can store the current $\theta_m$ and newly sampled $X_{m+1}$ to update $\theta_{m+1}$ as 
$$\widehat{\theta}_{m+1}=\frac{m\widehat{\theta}_m+X_{m+1}}{m+1}$$
for $m\geq N$.
However, generally $\widehat{\theta}_{m+1}$ will need all collected and resampled data to calculate with the martingale approach, for example, when $\widehat{\theta}_i$ is not sufficient. But with our method using martingales, we only need to store $\theta_m$ and $X_{m+1}$ to get $\theta_{m+1}$ always, which will save storage and time. 

\subsubsection{Gaussian distribution with unknown covariance}
We sample $X_1, \ldots, X_n$ from the distribution $[X|\Sigma] \sim \mathcal{N}(0,\Sigma)$. We compare the following two methods: {1) Classical Bayesian method with predictive sampling scheme; 2) Martingale posterior method with score function approach}, choosing $\epsilon_m={1}/({m+1})$ and the score function $s$. Here, we set $n=10000$ and the number of martingales to be $D=1000$. For each martingale, the iteration number is $T=1000$, that is we sample $D=1000$ of $\hattheta_{11000}$. For the IW prior, we use IW$(3I_5,7)$. Now we check the posterior distribution for (2,2)-element of the precision matrix (true value 4/3), which is the inverse of the covariance, due to the convenience for constructing martingales. In the experiment, the true $\theta^*$ is
$$
\theta^*=
{\begin{bmatrix}
2 & 1 & 0 & 0 & 0\\
1 & 2 & 1 & 0 & 0\\
0 & 1 & 2 & 1 & 0\\
0 & 0 & 1 & 2 & 1\\
0 & 0 & 0 & 1 & 2\\
\end{bmatrix}}^{-1}.
$$

\begin{figure}[ht!]
\centering
\subfigure[]{
\label{pl:Gaussian_covaraince_Doob_22}
\includegraphics[width=0.45\textwidth]{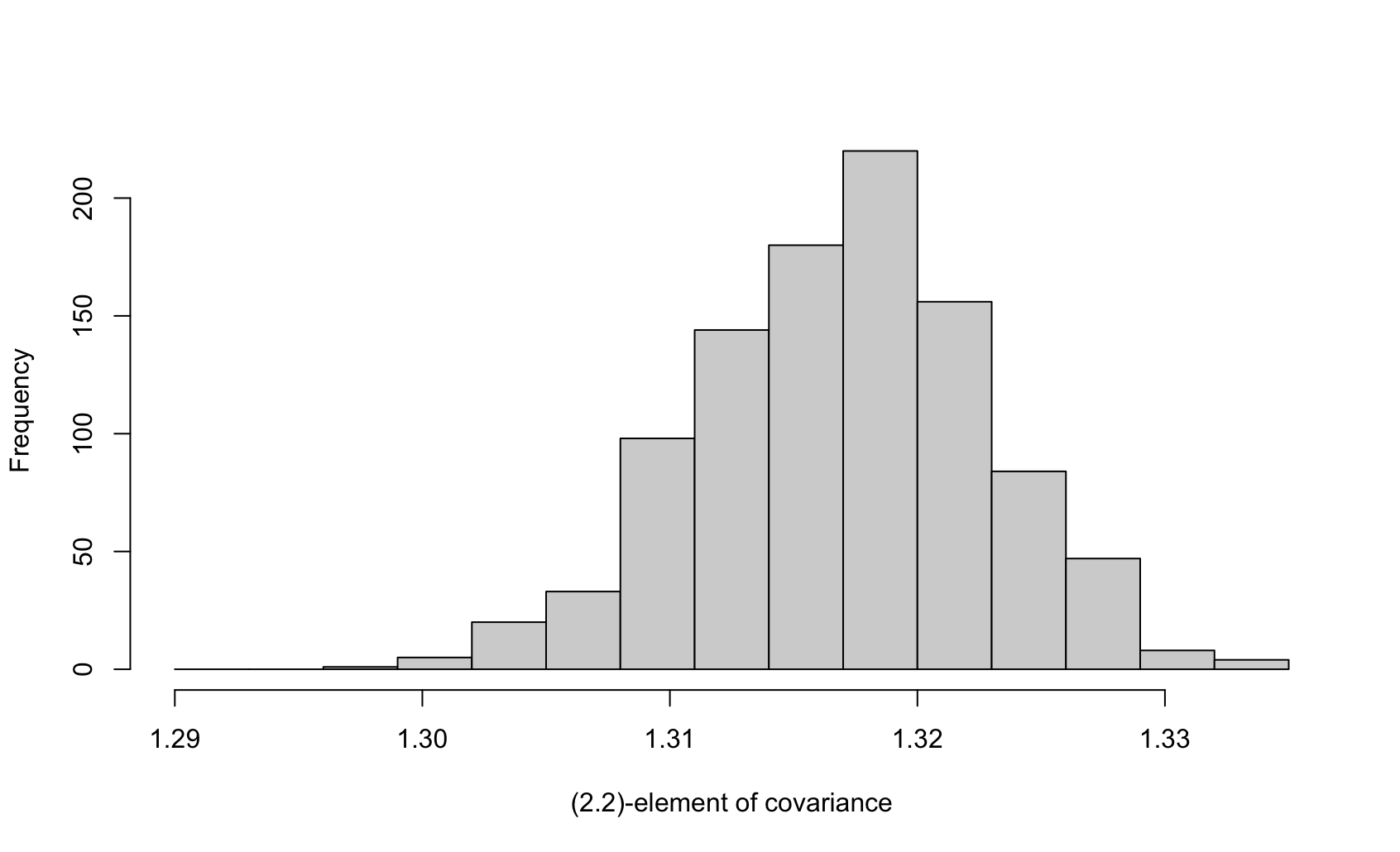}}
\subfigure[]{
\label{pl:Gaussian_covaraince_Doob_IW_22}
\includegraphics[width=0.45\textwidth]{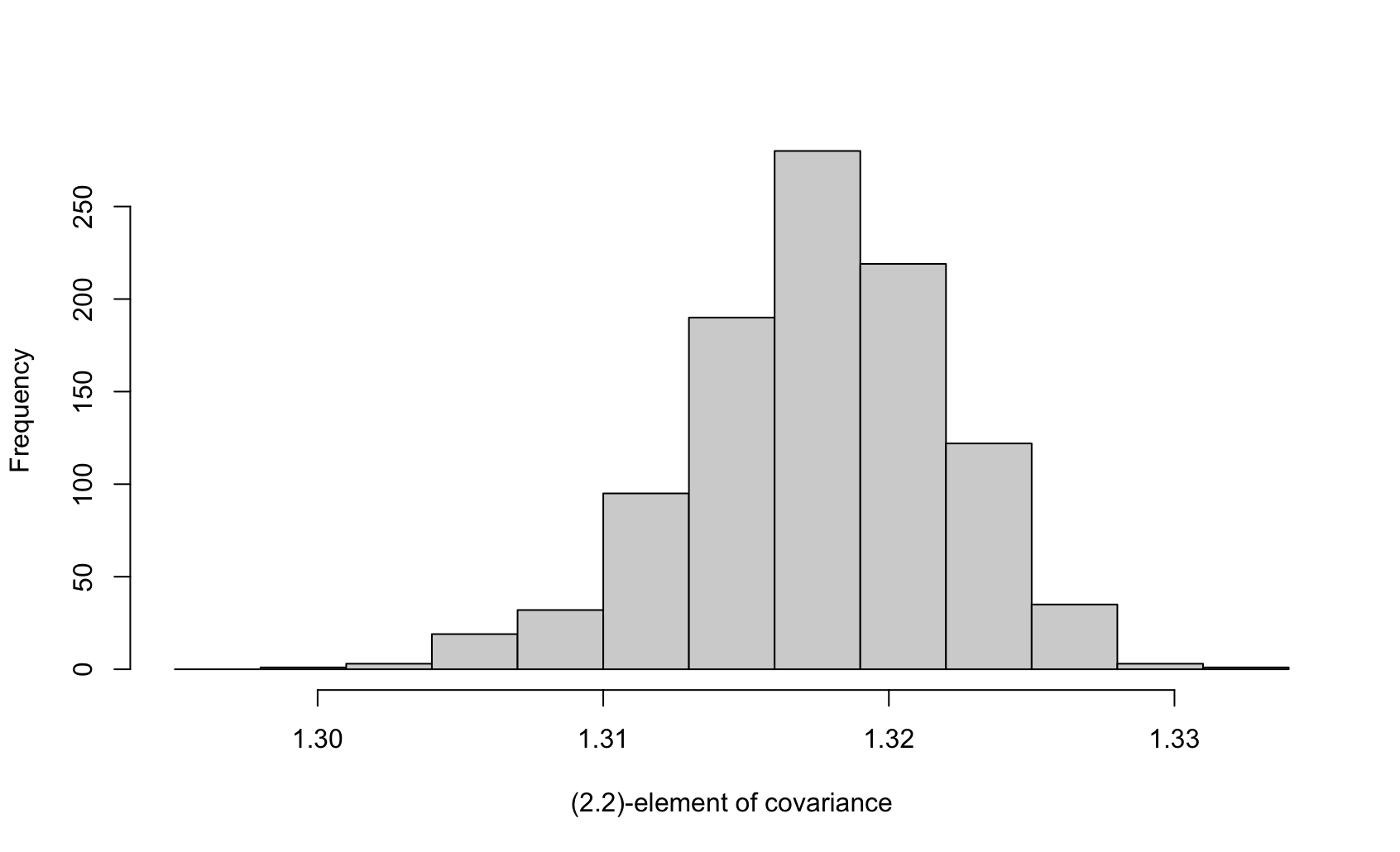}}
\caption{Histograms of martingale posterior samples of $\hattheta_{11000}$: (a) Classical Bayesian method; (b) Score method.}
\label{pl:Gaussian_covariance}
\end{figure}

\begin{table}[ht!]
    \centering
    {\renewcommand{\arraystretch}{2}
    \begin{tabular}{ccc}
    \bottomrule \hline
    Method & Mean & Standard Deviation \\
    \Centerstack{Classical Bayesian} & \Centerstack{1.317} & \Centerstack{0.006} \\
    \Centerstack{Score function}   & \Centerstack{1.318} & \Centerstack{0.005} \\
    [2ex] \toprule
    \end{tabular}
    }
    \caption{Means and deviations for posterior samples of $\hattheta_{1100}$.}
    \label{tab:Gaussian_covariance}
\end{table}

\vspace{0.5 em}
\noindent
The results are shown in Figure \ref{pl:Gaussian_covariance} and Table \ref{tab:Gaussian_covariance}. We can see the results are also quite close to each other.

\subsubsection{Autoregressive flow model}
Now we do the experiments on the autoregressive flow models {which gives a general scheme for defining expressive probability distributions by simple and basic distributions}; see \cite{Rezende_2015}, \cite{Kingma_2016} and~\cite{Papamakarios_2021}. In the autoregressive flow models, using gradients to do optimization is inevitable. So it will be very natural and convenient to extend the procedure to a further uncertainty quantification by our {sampling algorithm for the martingale posterior}.

Assume $z \sim p_z(z)$, $z \in \mathbb{R}^d$. $p_z$ is often a very simple distribution, called the base distribution. Now set $d=5$. We consider the inverse version of the autoregressive flow model with the affine transformers and the linear conditioners. It is very easy to see, no matter forward or inverse version of the model, if the base distribution is Gaussian, these are both Cholesky decomposition for the covariance matrices. In detail, consider
\begin{align*}
x_0&={a_0}z_0\\
x_1&={a_1}z_1+b_{10}x_0\\
x_2&={a_2}z_2+b_{21}x_1+b_{20}x_0\\
x_3&={a_3}z_3+b_{32}x_2+b_{31}x_1+b_{30}x_0\\
x_4&={a_4}z_4+b_{43}x_3+b_{42}x_2+b_{41}x_1+b_{40}x_0\\
\end{align*}
We denote $x=\mathcal{T}(z,\theta)$. We omit $\theta$ when no misleading. So $p_x(x)=p_z(\mathcal{T}^{-1}(x))|\det(J_\mathcal{T})|^{-1}$. The inverse of a triangular operator is also triangular. And here $\det(J_\mathcal{T})=\prod a_i$ is the Jacobian determinant of $\mathcal{T}$, which is very easy to compute. So $\log p_{x,\theta}(x)=\log p_z(\mathcal{T}^{-1}(x))-\log |\det(J_\mathcal{T})|.$ However, under nonlinear transformer cases, we need to notice the calculation. If we know $J_{\mathcal{T}^{-1}}(x)$, we may directly find the gradient $\nabla_{\theta}\log |\det J_{\mathcal{T}^{-1}}(x)|$. Otherwise, if we just know $J_\mathcal{T}(z)$, we should notice
\begin{align}
\label{equ: gradient_Jacobian}
\nabla_{\theta}\log \det J_{\mathcal{T}^{-1}}(x)=-[\nabla_{\theta}\log|\det J_\mathcal{T}(z)|+\nabla_z \log |\det J_\mathcal{T}(z)|\cdot\frac{\text{D} \mathcal{T}^{-1}(x)}{\text{D} \theta}].
\end{align}
To simplify the notation, all vectors in (\ref{equ: gradient_Jacobian}) are row vectors. Here ${\text{D} \mathcal{T}^{-1}(x)}/{\text{D} \theta}$ is the Jacobian matrix of $\mathcal{T}^{-1}$ with respect to the parameter $\theta$, which can be computed by ${\text{D} \mathcal{T}^{-1}(x)}/{\text{D} \theta}=-({\partial \mathcal{T}}/{\partial z})^{-1}{\partial \mathcal{T}}/{\partial \theta}=-J_\mathcal{T}^{-1}{\partial \mathcal{T}}/{\partial \theta}.$
In normalizing flows, we often choose $\mathcal{T}$ such that $\det(J_\mathcal{T})$ is very easy to compute. For $\log p_z(\mathcal{T}^{-1}(x))$, we have $\nabla_{\theta} \log p_z(\mathcal{T}^{-1}(x))=\nabla_{z} \log p_z(z) \cdot {\text{D} \mathcal{T}^{-1}(x)}/{\text{D} \theta}.$  Therefore, to implement our algorithm, we just need to know $z=\mathcal{T}^{-1}(x)$, which is what we directly generate in each iteration, $J_\mathcal{T}$ and ${\partial \mathcal{T}}/{\partial \theta}$. We do not need to use any more information about $\mathcal{T}^{-1}$.

However, with linear conditioners, this is a special case where we can write the form of the gradients explicitly as in (\ref{equ: gradient_Jacobian}). In many cases, especially when the model is large and deep, for example the diffusion models which are quite popular these days, we can use the forward and backward propagation techniques by the Leibniz chain rule to calculate the gradients; see 
\cite{Linnainmaa_1976}, \cite{Lecun_1998} and \cite{Schmidhuber_2015}.

\vspace{0.5 em}
\noindent
In the following two experiments, we use different base distributions: standard Gaussian and standard Laplacian. And we set collected data size $n=10000$, {martingale posterior} sample size $D=1000$. For each martingale, the run size $T=1000$, that is we sample $D=1000$ of $\hattheta_{11000}$. The true values for $a_i$ are all 2 and $b_{ij}$ are all 1.

\subsubsection{Autoregressive flow with Gaussian base distribution}
We do initialization using linear regression techniques due to the affine transformer and the linear conditioner setting here. The results are shown in Figure \ref{pl:Autoregressive_gaussian} and Table \ref{tab:Autoregressive_gaussian}.

\begin{figure}[ht!]
\centering
\subfigure[]{
\label{pl:Autoregressive_gaussian_a1}
\includegraphics[width=0.45\textwidth]{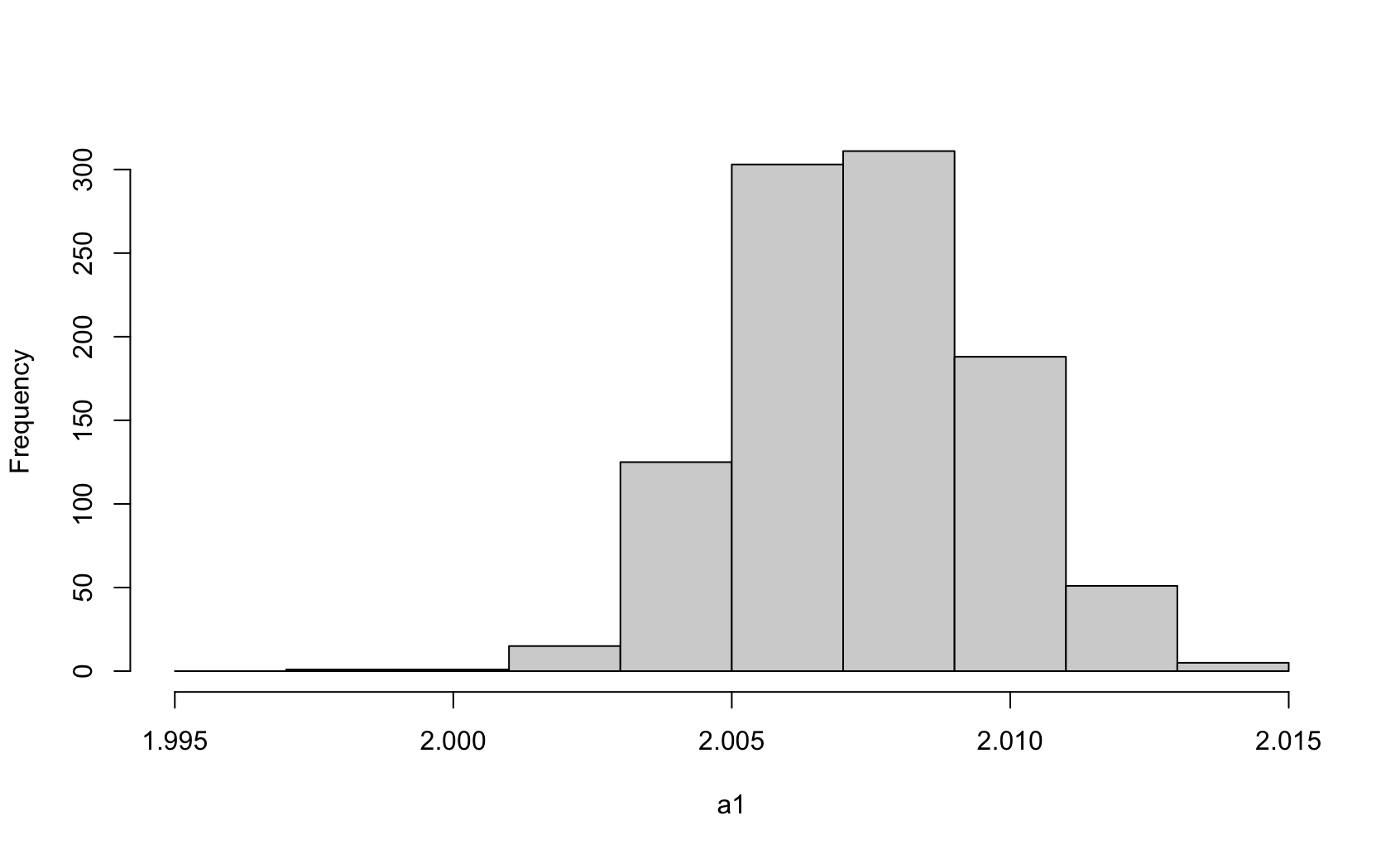}}
\subfigure[]{
\label{pl:Autoregressive_gaussian_b31}
\includegraphics[width=0.45\textwidth]{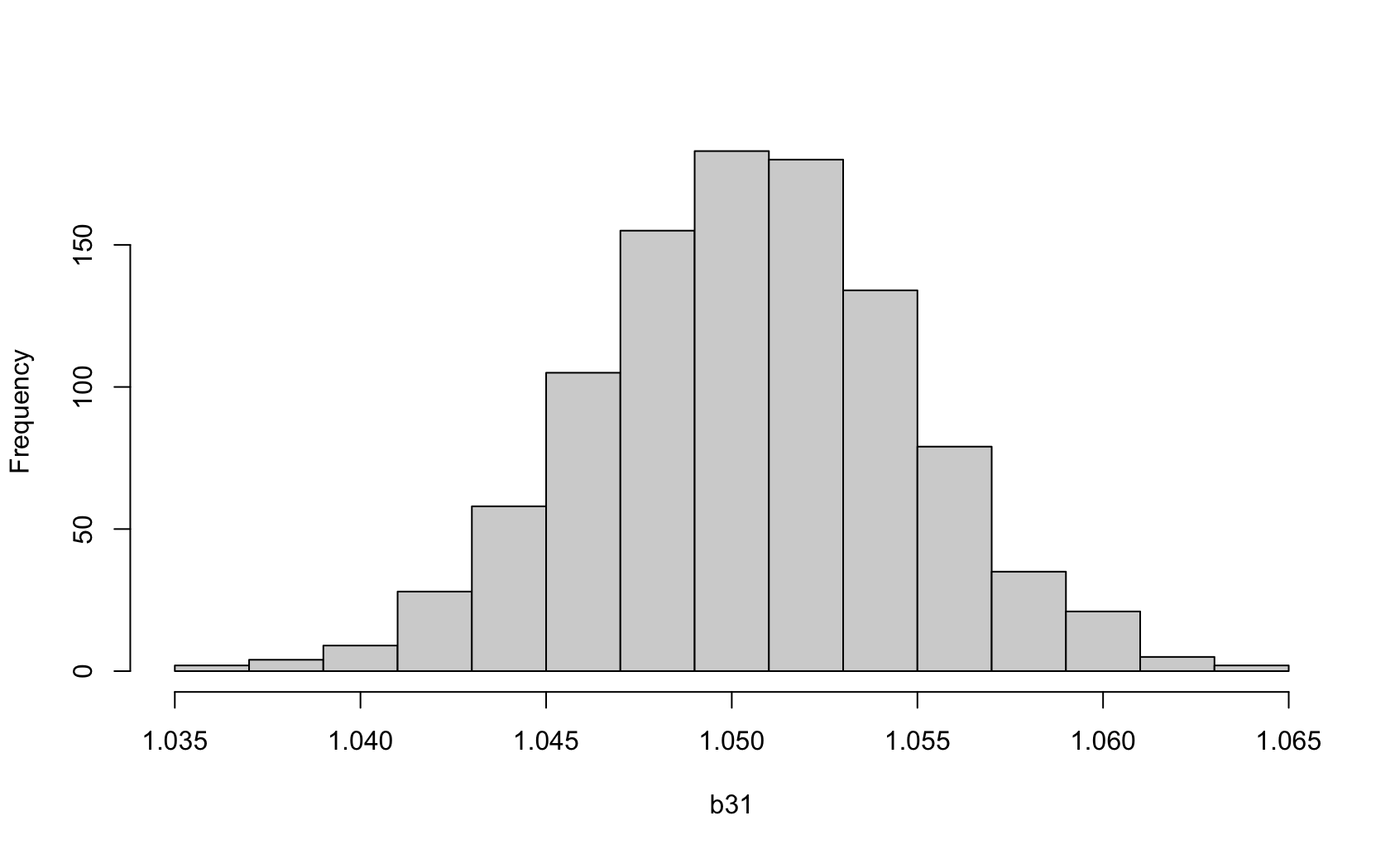}}
\caption{Histograms of posterior samples for autoregressive flows with standard Gaussian base distribution: (a) $a_1$; (b) $b_{31}$.}
\label{pl:Autoregressive_gaussian}
\end{figure}

\begin{table}[ht!]
    \centering
    {\renewcommand{\arraystretch}{2}
    \begin{tabular}{ccc}
    \bottomrule \hline
    Estimated element & Mean & Standard Deviation \\
    \Centerstack{$a_1$} & \Centerstack{2.0074} & \Centerstack{0.0022} \\
    \Centerstack{$b_{31}$}   & \Centerstack{1.0505} & \Centerstack{0.0043} \\
    [2ex] \toprule
    \end{tabular}
    }
    \caption{Means and deviations of posterior samples for autoregressive flows with standard Gaussian base distribution.}
    \label{tab:Autoregressive_gaussian}
\end{table}

\subsubsection{Autoregressive flow model with Laplace base distribution}
Linear models with Laplacian errors cannot express the solutions explicitly. Due to the properties of Laplacian, we imitate the initialization of Gaussian cases, except for just changing the initialization of $a_i$ as $\sqrt{{\text{MSE}}/{2}}$ where MSE is the mean squared error, because if \(X \sim \operatorname{Laplace}(\nu, 1)\) then \(k X+c \sim \operatorname{Laplace}(k \nu+c,|k| )\), and the variance is $2k^2$ for any constant $k$. The results are shown in Figure~\ref{pl:Autoregressive_laplace} and Table \ref{tab:Autoregressive_laplace}.

\begin{figure}[ht!]
\centering
\subfigure[]{
\label{pl:Autoregressive_laplace_a1}
\includegraphics[width=0.45\textwidth]{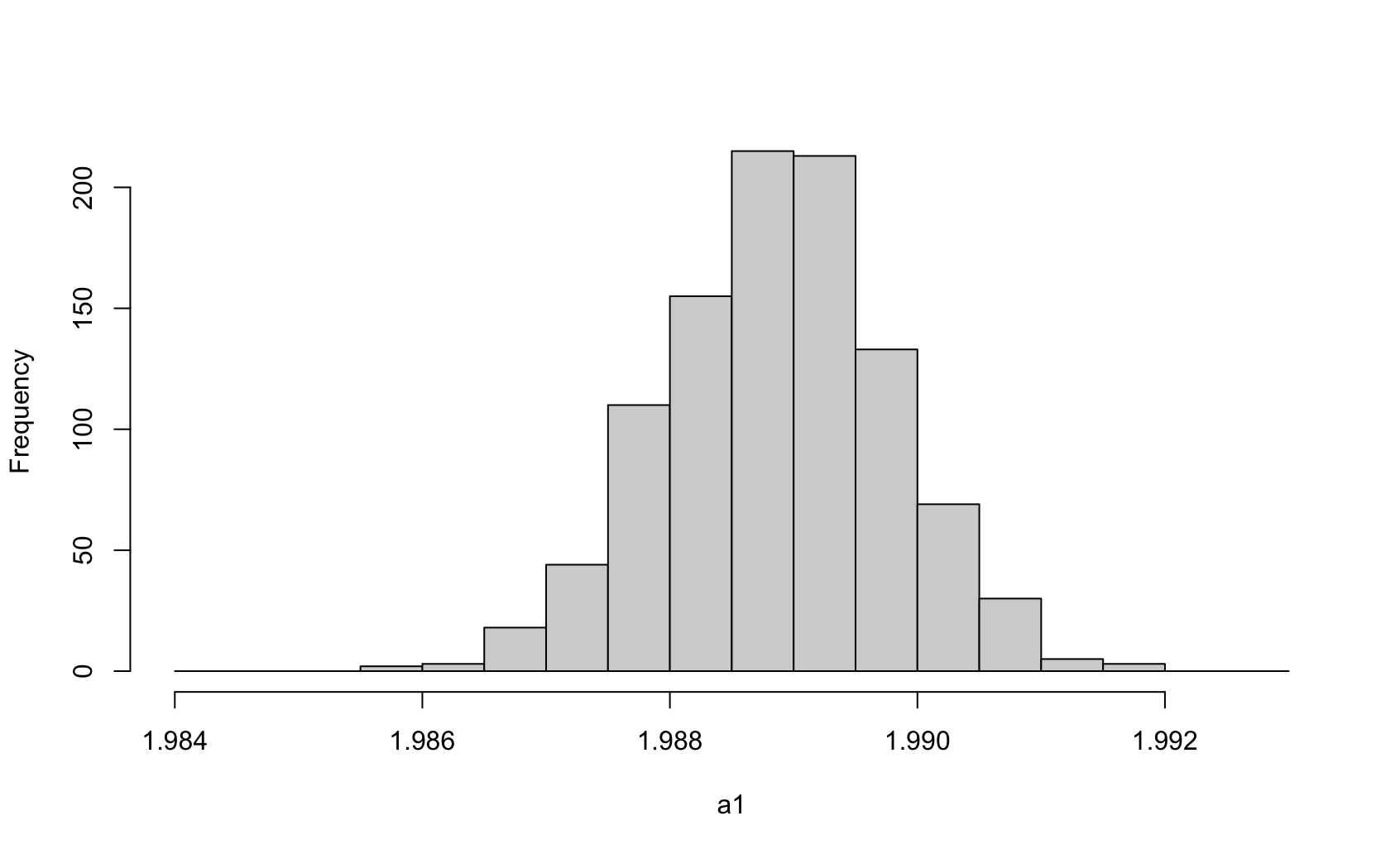}}
\subfigure[]{
\label{pl:Autoregressive_laplace_b31}
\includegraphics[width=0.45\textwidth]{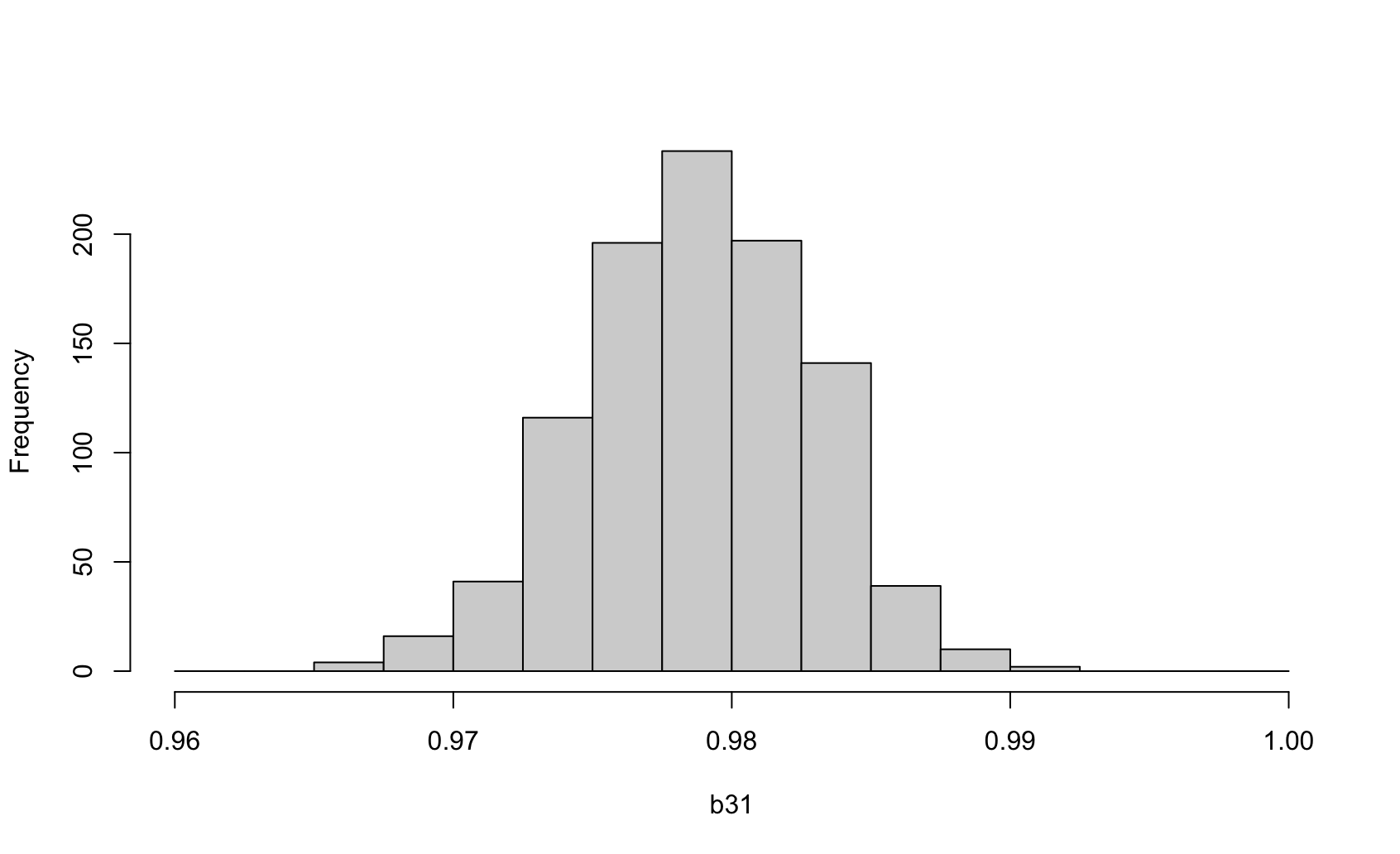}}
\caption{Histograms of posterior samples for autoregressive flows with standard Laplacian base distribution: (a) $a_1$; (b) $b_{31}$.}
\label{pl:Autoregressive_laplace}
\end{figure}

\begin{table}[ht!]
    \centering
    {\renewcommand{\arraystretch}{2}
    \begin{tabular}{ccc}
    \bottomrule \hline
    Estimated element & Mean & Standard Deviation \\
    \Centerstack{$a_1$} & \Centerstack{1.9889} & \Centerstack{0.0009} \\
    \Centerstack{$b_{31}$}   & \Centerstack{0.9788} & \Centerstack{0.0041} \\
    [2ex] \toprule
    \end{tabular}
    }
    \caption{Means and deviations of posterior samples for autoregressive flows with standard Laplacian base distribution.}
    \label{tab:Autoregressive_laplace}
\end{table}

In this case, {if we update the parameters of interest using a posterior mean according to the Doob's martingale convergence theorem}, we need to do linear regression to calculate MSE multiple times. However, with our methods, we only need to calculate them once.

\subsection{Comparison with variational Bayes -- simulated data}
\label{subsec:Comparison with variational Bayes -- simulated data}
We implemented comparisons between our score function method and VB, also known as variational inference. VB is an natural extension of the EM algorithm {with a wide range of applications, for instance, in reinforcement learning,} and is well-studied in the literature; for example, see~\cite{Ishiguro_2017}, \cite{Zintgraf_2021} and \cite{vanNiekerk_2024}. In summary, instead of finding the exact maximum likelihood estimates (MLEs), VB optimizes the Kullback-Leibler (KL) divergence between the proposed family of approximation distributions and the posterior distribution. By using simple approximation distributions and making certain assumptions, such as the independence of parameters, VB simplifies the computations. One of the main reasons for choosing VB is that the classical prior-posterior Bayesian often requires Markov Chain Monte Carlo (MCMC), which leads to high computational costs. However, VB is not unbiased and lacks accuracy. Therefore, using VB represents a trade-off: better speed at the cost of accuracy.

On the other hand, our score function method can sample from posterior distributions and perform UQ also without relying on MCMC. Moreover, the score function method is unbiased and can be implemented in parallel. Thus, given sufficient computational resources, the score function method can be both fast and ensure accuracy and unbiasedness.

Even if we use a small model, we can illustrate the above claim. We simulate $n=100$ data from $\mathcal{N}(0,5^2)$, and compare our score function method with VB for finding the posterior distributions of the mean $\mu$ and the variance $\sigma^2$. The results are shown in Figure~\ref{pl:Comnparison with VB_simulated} and Table~\ref{tab:comp_VB}. The left one shows the posteriors for $\mu$, and the right one shows those for $\sigma^2$. The black curves are the VB results by using Algorithm 1 in~\cite{Tran_2021}. The histograms are the score function method results. The gray vertical dotted lines are the averages for the martingale posterior samples. We can see that their posteriors are quite similar. 

\begin{figure}[ht!]
\centering
\subfigure[]{
\includegraphics[width=0.45\textwidth]{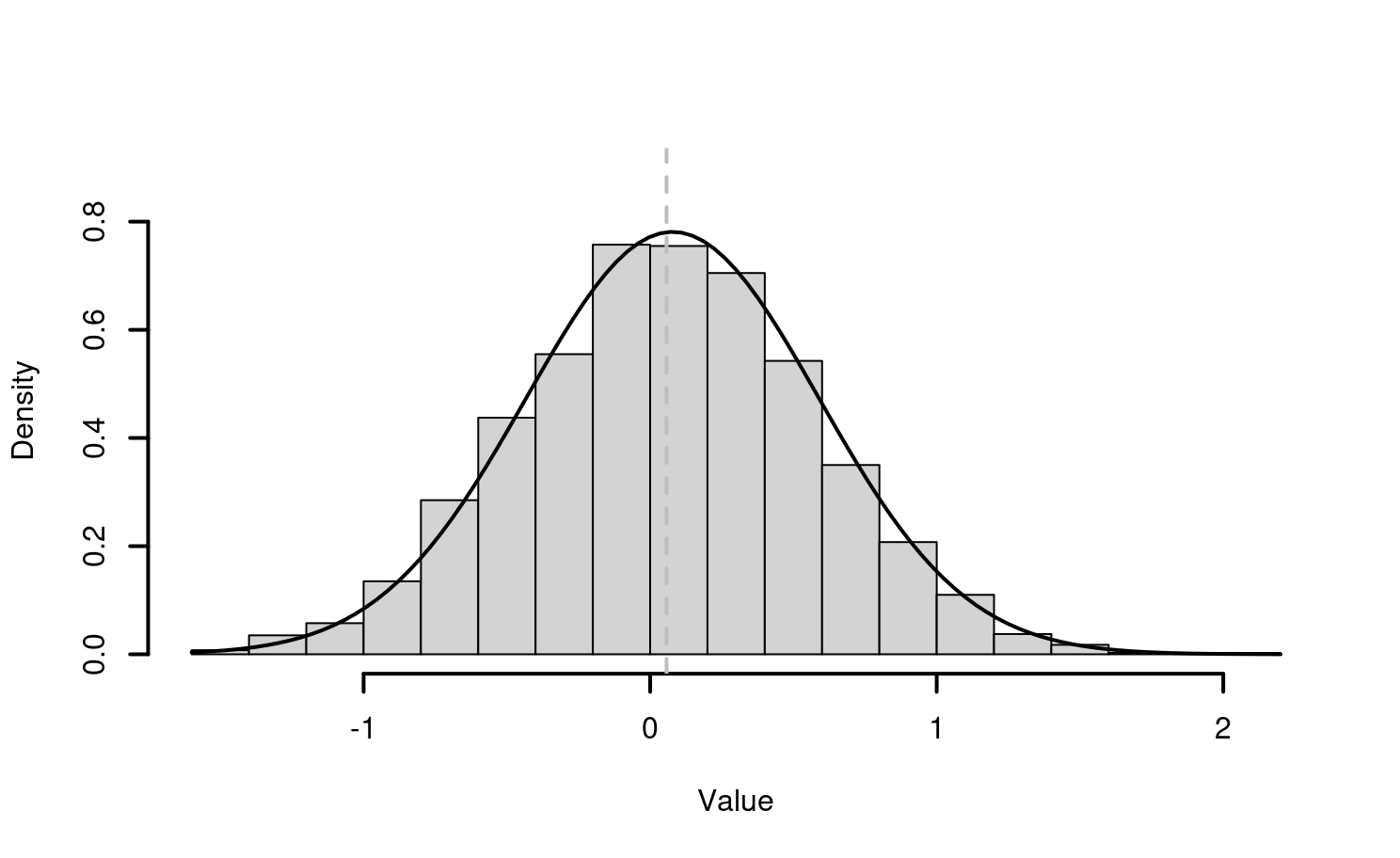}}
\subfigure[]{
\includegraphics[width=0.45\textwidth]{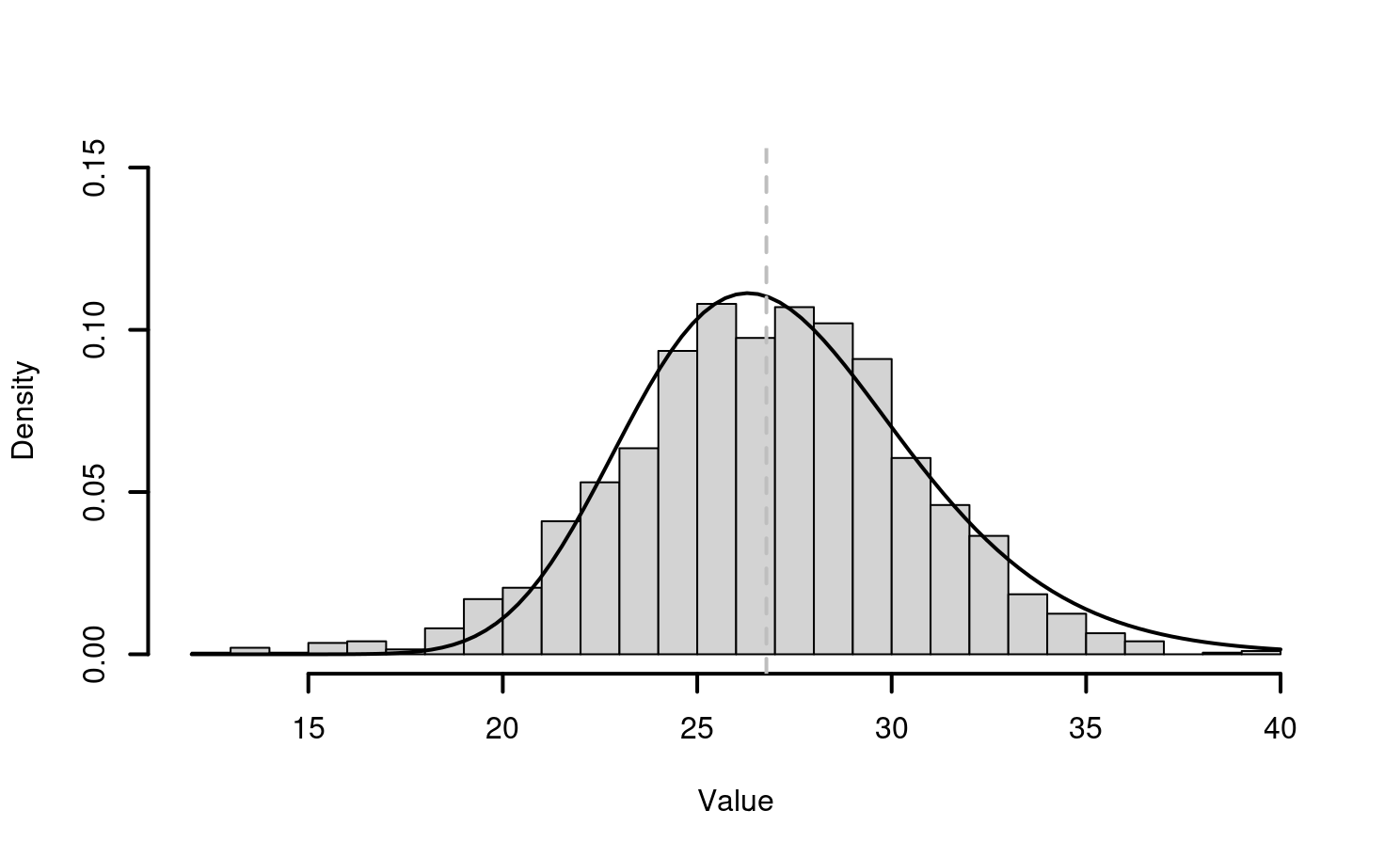}}
\caption{Comparison with variational Bayes (black lines) and martingale posterior (histogram) using the simulated data.}
\label{pl:Comnparison with VB_simulated}
\end{figure}

\begin{table}[ht!]
    \centering
    {\renewcommand{\arraystretch}{2}
    \begin{tabular}{ccccc}
    \bottomrule \hline
    Method & $\mu$ post-mean& $\mu$ post-variance&$\sigma^2$ post-mean& $\sigma^2$ post-variance\\
    \Centerstack{VB} & \Centerstack{0.078} & \Centerstack{0.261} &27.276&14.037 \\
    \Centerstack{Score function}   & \Centerstack{0.057} & \Centerstack{0.267}&26.778&13.980 \\
    [2ex] \toprule
    \end{tabular}
    }
    \caption{Means and deviations for posterior samples of $\hattheta_{5100}$.}
    \label{tab:comp_VB}
\end{table}

\subsection{Comparison with variational Bayes -- real data}
\label{subsec:Comparison with variational Bayes -- real data}

In this section, we compare our score function method with variational Bayes on a real dataset. \cite{Bates_1988} applies a nonlinear model
$$
x_i = b_1\{1-\exp(-b_2t_i)\}+e_i
$$
on the data from~\cite{Marske_1967} about biochemical oxygen demand $x$ of prepared water samples to incubation time $t$, where $e_i$ is assumed to be the i.i.d. Gaussian distributed error with the variance $\sigma^2$ to be estimated. The dataset is quite small with only 6 data points: 
$$\left\{\left(t_{i}, x_{i}\right)\right\}=\{(1,8.3),(2,10.3),(3,19),(4,16),(5,15.6),(7,19.8)\}.$$
There are a lot of literature dealing with this problem; for example, see~\cite{Newton_1994} about a weighted likelihood bootstrap method.

In Section~\ref{subsec:Comparison with variational Bayes -- simulated data} for a simulated dataset, we use mean field variational Bayes (MFVB). However, MFVB assumes the independence between every parameter. {Even if MFVB provides good posterior mean estimates, it often underperforms on variances and covariances, and} the deeper relationship between the parameters cannot be caught {; see~\cite{Giordano_2018} for example}. Therefore, here we use fixed form variational Bayes (FFVB). Instead of assuming the independence, FFVB assumes a fixed parametric family for the approximation density. For example, if we use multivariate Gaussian as the approximation density, the mean vector and the covariance matrix are the parameters to be estimated. More details can be found in~\cite{Tran_2021}. {See~\cite{Honkela_2010} for an application of FFVB by the gradient method on Riemannian manifolds.}

In our experiment, we use R package ‘LaplacesDemon’~\cite{LaplacesDemon_R_package} to implement FFVB, with the normal priors for $b_1$ and $b_2$ and the inverse gamma prior for $\sigma^2$. Figure~\ref{pl:Comnparison with VB_real} shows $D=1000$ posterior samples with their empirical density contours for each method. The left panel is from the score function method which are very similar to the results in~\cite{Newton_1994}, while the right is from the FFVB method. In the left panel, we update the parameters by
$$
\theta_{m+1}=\theta_m + H_m^{-1} s(\theta_m, x_{m+1}),
$$
where $\theta_m = (b_{1,m},b_{2,m})^{\top}$ and $H_m$ is the matrix of second derivatives at $\theta_m$ and summed over the $x_{1:m}$. The score function and Hessian are the vector of first derivatives and matrix of second derivatives arising from the log-density function, 
$$\left[-\frac{1}{2}\log \sigma-\frac{1}{2}(x-b_1(1-e^{-b_2t}))^2/\sigma^2\right].$$

\begin{figure}[ht!]
\centering
\subfigure[]{
\includegraphics[width=0.45\textwidth]{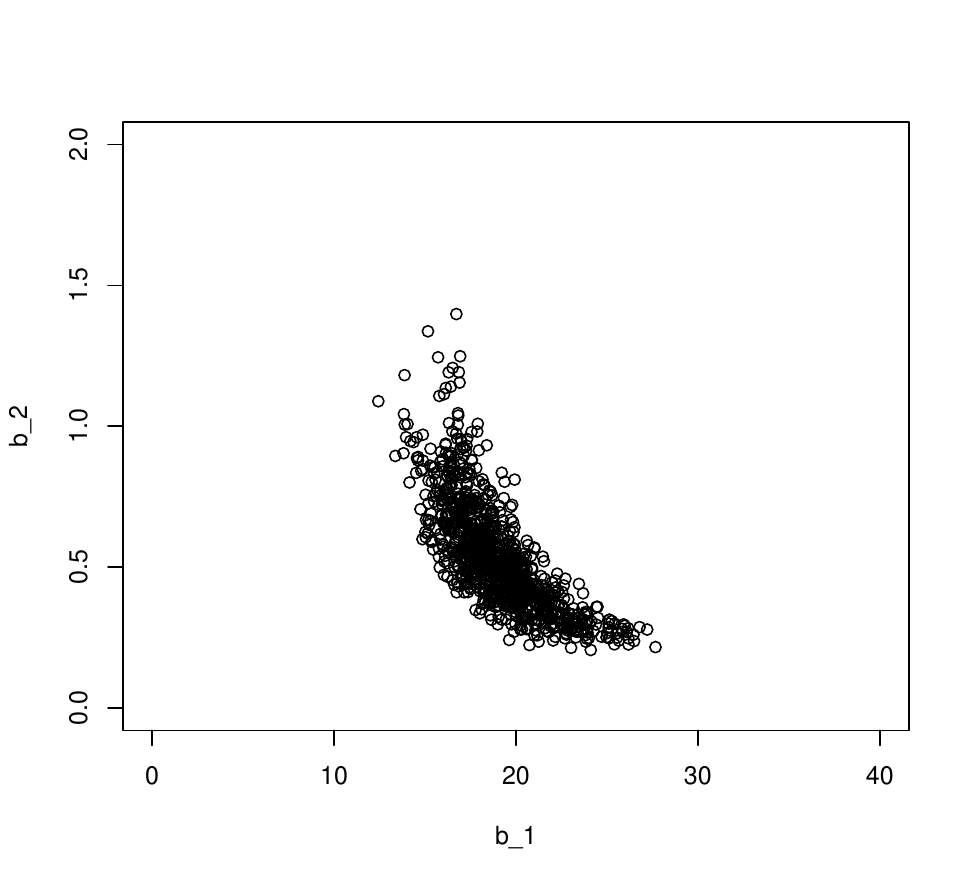}}
\subfigure[]{
\includegraphics[width=0.49\textwidth]{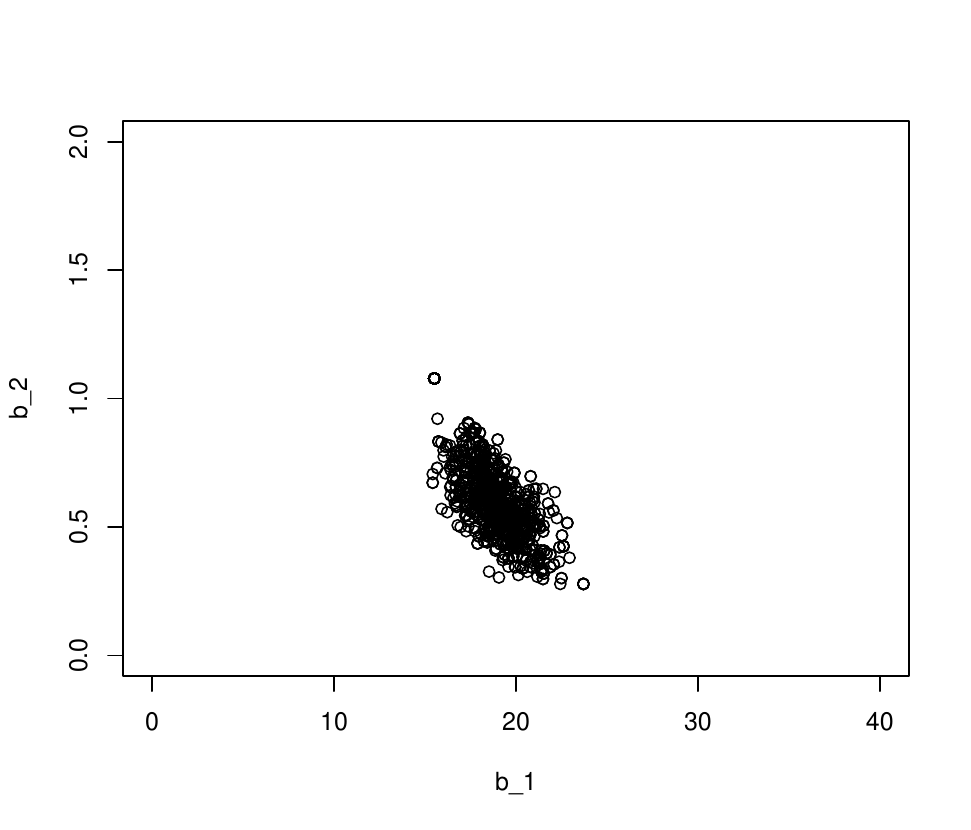}}
\caption{Comparison with variational Bayes with the real data: two left panels corresponding to different settings of the martingale posterior while the right panel uses a mean field variational Bayesian approach.}
\label{pl:Comnparison with VB_real}
\end{figure}

Compared with the natural martingale posterior computed by the score function method, VB needs an artificially assigned approximation density family, leading to an unnatural, constrained and approximate relationship between each parameter. 
Further, VB provides biased samples and takes substantially more time to run compared to the score function method.

\section{Summary and discussion}
\label{sec:summary_and_discussion}

Bayesian uncertainty arises from the unseen data. To quantify this uncertainty the need is to model the unseen data given what has been seen. This is done using one step ahead predictive density functions, as originally described in \cite{Fong_2021}. Martingales play a key role and therefore how to construct appropriate martingales becomes the key problem. In this paper, we discuss the martingale posteriors constructed from score functions
by considering a sequence of estimators based on increasing population sizes. A Bayesian posterior arises in the limit as the population size tends to infinity. The convergence of the parameter ensures the sequence of missing observations are asymptotically exchangeable which in turn guarantees the existence of a posterior distribution. This is in sharp contrast with the Bayesian paradigm which insists on the observed and missing observations being exchangeable, which introduces constraints which makes inference often challenging. On the other hand, within our asymptotically exchangeable setting, we can run martingales in parallel and only need to sample the model which is typically straightforward to do.

Due to the zero expectation property of score functions, the construction is highly suited to martingales. The technical aspects of our paper are then concerned with convergence and asymptotic exchangeability in one- and multi- dimensional cases. 
The structure of the martingale algorithm is similar to gradient-based optimization algorithms. This matches the idea of getting the best estimator for the next step. 
The algorithm can be easily implemented even for complicated models, as we have demonstrated in a number of illustrations. 

\bibliographystyle{abbrv}
\bibliography{Fuheng}

\begin{thebibliography}{10}

\bibitem{Aldous_1985}
D.~J. Aldous.
\newblock Exchangeability and related topics.
\newblock In P.~L. Hennequin, editor, {\em {\'E}cole d'{\'E}t{\'e} de
  Probabilit{\'e}s de Saint-Flour XIII --- 1983}, pages 1--198, Berlin,
  Heidelberg, 1985. Springer Berlin Heidelberg.

\bibitem{Bates_1988}
D.~M. Bates and D.~G. Watts.
\newblock {\em Nonlinear regression analysis and its applications.}
\newblock Wiley series in probability and mathematical statistics. Applied
  probability and statistics. Wiley, New York, 1988.

\bibitem{Cui_2023}
F.~Cui and S.~G. Walker.
\newblock A {B}ayesian bootstrap for mixture models ({R}evised for {B}ayesian
  {A}nalysis), 2023.
\newblock arXiv: 2310.00880.

\bibitem{Doob_1949}
J.~L. Doob.
\newblock Application of the theory of martingales.
\newblock In {\em Le Calcul des Probabilit/'es et ses Applications (Lyon, 28
  Juin – 3 Juillet, 1948)}, pages 23--27. Paris: CNRS, 1949.

\bibitem{Fong_2021}
E.~Fong, C.~Holmes, and S.~G. Walker.
\newblock {Martingale posterior distributions}.
\newblock {\em Journal of the Royal Statistical Society Series B: Statistical
  Methodology}, 85(5):1357--1391, 02 2024.

\bibitem{Fortini_2020}
S.~Fortini and S.~Petrone.
\newblock {Quasi-Bayes Properties of a Procedure for Sequential Learning in
  Mixture Models}.
\newblock {\em Journal of the Royal Statistical Society Series B: Statistical
  Methodology}, 82(4):1087--1114, 06 2020.

\bibitem{Giordano_2018}
R.~Giordano, T.~Broderick, and M.~I. Jordan.
\newblock Covariances, robustness, and variational {B}ayes.
\newblock {\em Journal of Machine Learning Research}, 19(51):1--49, 2018.

\bibitem{Granziol_2022}
D.~Granziol, S.~Zohren, and S.~Roberts.
\newblock Learning rates as a function of batch size: A random matrix theory
  approach to neural network training.
\newblock {\em Journal of Machine Learning Research}, 23(173):1--65, 2022.

\bibitem{Holmes}
C.~C. Holmes and S.~G. Walker.
\newblock Statistical inference with exchangeability and martingales.
\newblock {\em Philosophical Transactions A}, pages 2022--0143, 2023.

\bibitem{Honkela_2010}
A.~Honkela, T.~Raiko, M.~Kuusela, M.~Tornio, and J.~Karhunen.
\newblock Approximate {R}iemannian conjugate gradient learning for fixed-form
  variational {B}ayes.
\newblock {\em Journal of Machine Learning Research}, 11(106):3235--3268, 2010.

\bibitem{Ishiguro_2017}
K.~Ishiguro, I.~Sato, and N.~Ueda.
\newblock Averaged collapsed variational {B}ayes inference.
\newblock {\em Journal of Machine Learning Research}, 18(1):1--29, 2017.

\bibitem{Kingma_2016}
D.~P. Kingma, T.~Salimans, R.~Jozefowicz, X.~Chen, I.~Sutskever, and
  M.~Welling.
\newblock Improved variational inference with inverse autoregressive flow.
\newblock In D.~Lee, M.~Sugiyama, U.~Luxburg, I.~Guyon, and R.~Garnett,
  editors, {\em Advances in Neural Information Processing Systems}, volume~29.
  Curran Associates, Inc., 2016.

\bibitem{Lecun_1998}
Y.~Lecun, L.~Bottou, Y.~Bengio, and P.~Haffner.
\newblock Gradient-based learning applied to document recognition.
\newblock {\em Proceedings of the IEEE}, 86(11):2278--2324, 1998.

\bibitem{Lijoi_2007}
A.~Lijoi, I.~Pruenster, and S.~G. Walker.
\newblock Bayesian consistency for stationary models.
\newblock {\em Econometric Theory}, 23:749--759, 2007.

\bibitem{Linnainmaa_1976}
S.~Linnainmaa.
\newblock Taylor expansion of the accumulated rounding error.
\newblock {\em BIT Numerical Mathematics}, 16(2):146--160, 1976.

\bibitem{Marske_1967}
D.~M. Marske.
\newblock Biochemical oxygen demand data interpretation using sum of squares
  surface.
\newblock Master's thesis, University of Wisconsin, Madison, 1967.

\bibitem{Newton_1994}
M.~A. Newton and A.~E. Raftery.
\newblock Approximate {B}ayesian inference with the weighted likelihood
  bootstrap.
\newblock {\em Journal of the Royal Statistical Society. Series B,
  Methodological}, 56(1):3--48, 1994.

\bibitem{Nguyen_2019}
L.~M. Nguyen, P.~H. Nguyen, P.~Richt{\'a}rik, K.~Scheinberg,
  M.~Tak{\'a}{\v{c}}, and M.~van Dijk.
\newblock New convergence aspects of stochastic gradient algorithms.
\newblock {\em Journal of Machine Learning Research}, 20(176):1--49, 2019.

\bibitem{Papamakarios_2021}
G.~Papamakarios, E.~Nalisnick, D.~J. Rezende, S.~Mohamed, and
  B.~Lakshminarayanan.
\newblock Normalizing flows for probabilistic modeling and inference.
\newblock {\em Journal of Machine Learning Research}, 22(57):1--64, 2021.

\bibitem{Berti_2006}
L.~P. Patrizia~Berti and P.~Rigo.
\newblock Almost sure weak convergence of random probability measures.
\newblock {\em Stochastics}, 78(2):91--97, 2006.

\bibitem{Rezende_2015}
D.~Rezende and S.~Mohamed.
\newblock Variational inference with normalizing flows.
\newblock In F.~Bach and D.~Blei, editors, {\em Proceedings of the 32nd
  International Conference on Machine Learning}, volume~37 of {\em Proceedings
  of Machine Learning Research}, pages 1530--1538, Lille, France, 07--09 Jul
  2015. PMLR.

\bibitem{Schaul_2013}
T.~Schaul, S.~Zhang, and Y.~LeCun.
\newblock No more pesky learning rates.
\newblock In S.~Dasgupta and D.~McAllester, editors, {\em Proceedings of the
  30th International Conference on Machine Learning}, volume~28 of {\em
  Proceedings of Machine Learning Research}, pages 343--351, Atlanta, Georgia,
  USA, 17--19 Jun 2013. PMLR.

\bibitem{Schmidhuber_2015}
J.~Schmidhuber.
\newblock Deep learning in neural networks: An overview.
\newblock {\em Neural Networks}, 61:85--117, Jan 2015.

\bibitem{Serfling_2009}
R.~J. Serfling.
\newblock {\em Approximation Theorems of Mathematical Statistics}.
\newblock Wiley Series in Probability and Statistics. Wiley, 2009.

\bibitem{Shi_2023}
B.~Shi, W.~Su, and M.~I. Jordan.
\newblock On learning rates and {S}chr{\"o}dinger operators.
\newblock {\em Journal of Machine Learning Research}, 24(379):1--53, 2023.

\bibitem{Sisson_2007}
S.~A. Sisson, Y.~Fan, and M.~M. Tanaka.
\newblock Sequential {M}onte {C}arlo without likelihoods.
\newblock {\em Proceedings of the National Academy of Sciences},
  104(6):1760--1765, 2007.

\bibitem{LaplacesDemon_R_package}
{Statisticat} and {LLC.}
\newblock {\em LaplacesDemon: Complete Environment for Bayesian Inference},
  2021.
\newblock R package version 16.1.6.

\bibitem{Stephan_2017}
M.~Stephan, M.~D. Hoffman, D.~M. Blei, et~al.
\newblock Stochastic gradient descent as approximate {B}ayesian inference.
\newblock {\em Journal of Machine Learning Research}, 18(134):1--35, 2017.

\bibitem{Toni_2009}
T.~Toni, D.~Welch, N.~Strelkowa, A.~Ipsen, and M.~P. Stumpf.
\newblock Approximate {B}ayesian computation scheme for parameter inference and
  model selection in dynamical systems.
\newblock {\em Journal of the Royal Society Interface}, 6(31):187--202, 2009.

\bibitem{Tran_2021}
M.-N. Tran, T.-N. Nguyen, and V.-H. Dao.
\newblock A practical tutorial on variational {B}ayes, 2021.
\newblock arXiv: 2103.01327.

\bibitem{Vaart_1998}
A.~W. van~der Vaart.
\newblock {\em Asymptotic Statistics}.
\newblock Cambridge Series in Statistical and Probabilistic Mathematics.
  Cambridge University Press, 1998.

\bibitem{vanNiekerk_2024}
J.~van Niekerk and H.~Rue.
\newblock Low-rank variational {B}ayes correction to the {L}aplace method.
\newblock {\em Journal of Machine Learning Research}, 25(62):1--25, 2024.

\bibitem{Wang_2023}
X.~Wang and Y.~Yuan.
\newblock On the convergence of stochastic gradient descent with
  bandwidth-based step size.
\newblock {\em Journal of Machine Learning Research}, 24(48):1--49, 2023.

\bibitem{Zintgraf_2021}
L.~Zintgraf, S.~Schulze, C.~Lu, L.~Feng, M.~Igl, K.~Shiarlis, Y.~Gal,
  K.~Hofmann, and S.~Whiteson.
\newblock Varibad: Variational {B}ayes-adaptive deep {RL} via meta-learning.
\newblock {\em Journal of Machine Learning Research}, 22(289):1--39, 2021.

\end{thebibliography}
\end{document}